\long\def\comment#1{}
\newtheorem{theorem}{Theorem}
\newtheorem{corollary}[theorem]{Corollary}
\newtheorem{definition}{Definition}
\DeclareMathOperator*{\argmax}{arg\,max}
\DeclareMathOperator*{\argmin}{arg\,min}
\title{Fair Allocation of Bandwidth At Edge Servers For Concurrent Hierarchical Federated Learning

\vspace*{0.03in}}
\author{
    \IEEEauthorblockN{Md Anwar Hossen\textsuperscript{*}, Fatema Siddika\textsuperscript{*}, Wensheng Zhang}
    \IEEEauthorblockA{
        \textit{Department of Computer Science} \\
        \textit{Iowa State University}, Ames, USA \\
        \{manwar, fatemask, wzhang\}@iastate.edu}
    \vspace*{0.02in} 
    \IEEEauthorblockA{\textsuperscript{*}These authors have contributed equally to this work.}
}
\begin{document}
\maketitle

\begin{abstract}
This paper explores concurrent FL processes 
within a three-tier system, 
with edge servers between edge devices and FL servers. 
A challenge in this setup is  
the limited 
bandwidth from edge devices to edge servers. 
Thus, allocating the bandwidth efficiently and fairly to support simultaneous FL processes becomes crucial. 
We propose a game-theoretic approach to model the bandwidth allocation problem 
and develop 
distributed and centralized heuristic schemes to 
find an approximate Nash Equilibrium of the game. 
Through rigorous analysis and experimentation, 
we demonstrate that our schemes efficiently and fairly assign the 
bandwidth to the FL processes 
and outperform a baseline scheme where each edge server assigns bandwidth proportionally to the FL servers' requests that it receives.
The proposed distributed and centralized schemes have similar performance.

\end{abstract}

{\noindent\bf Keywords:} Federated Learning, Edge-Cloud Computing, Game Theory, Bandwidth Allocation.

\section{Introduction}

A substantial amount of training data from multiple sources 
is important to develop top-notch machine learning models~\cite{han2022tiff}.
Due to the proliferation of smartphones and devices, wearable devices, and IoT sensors,
a significant portion of data produced these days is acquired beyond the cloud,
particularly at the distributed end devices at the edge. 
Hence, much valuable real-world data exists in a highly distributed manner,
which could be utilized to train deep-learning models and develop more precise and intelligent applications.
Even though these smart devices collectively hold a substantial amount of valuable data,
their sensitive nature and diverse constraints  
- such as privacy concerns of their owners and privacy laws
- make it extremely difficult 
for companies to utilize or integrate such data without explicit consent for a special purpose.
As a result, methods that retain data on the local device 
while sharing the model have become more appealing.
Traditional centralized learning needs all data to be gathered and
stored in a central location on a data center or cloud server,
which raises the concern of privacy risks and data leakage. 
To overcome the challenges mentioned above,
federated learning (FL) has emerged as a new paradigm of distributed machine learning that
orchestrates model training across mobile devices~\cite{mcmahan2017communication}.

FL has taken an important and even faster-growing role in
the trend of decentralizing machine learning and
making learning more privacy-preserving.
The distributed nature has benefited from
the cloud-edge computing paradigm
that has already been widely adopted in the industry.

Earlier research on FL has typically assumed
a two-tier architecture,
i.e., a central server and multiple end devices.
In this setup, the central server can be a cloud server located deep within the Internet
or an edge server positioned at the network edge.
While a cloud server can access a vast number of end devices and
train deep learning models using massive datasets,
the communication between the server and the end devices can be slow and unpredictable
due to long and dynamic network connections.
Conversely, an edge server can only connect with nearby end devices and
utilize their limited datasets, but
the communication between the server and the end devices is more efficient
due to short-range communication and low latency \cite{liu2020client}.

The above tradeoff has further motivated the proposal of three-tier architectures
such as the client-edge-cloud hierarchical system~\cite{liu2020client},
where end devices form a large base of clients,
a cloud server acts as the central parameter server for the FL process,
and a number of edge servers lie in the middle connecting the above two tiers.
This emerging architecture is expected to combine the advantages of
both the cloud server based and the edge server-based two-tier architecture,
and to deliver good scalability and performance at the same time.

\begin{figure}
    \centering
    \includegraphics[width=8cm]{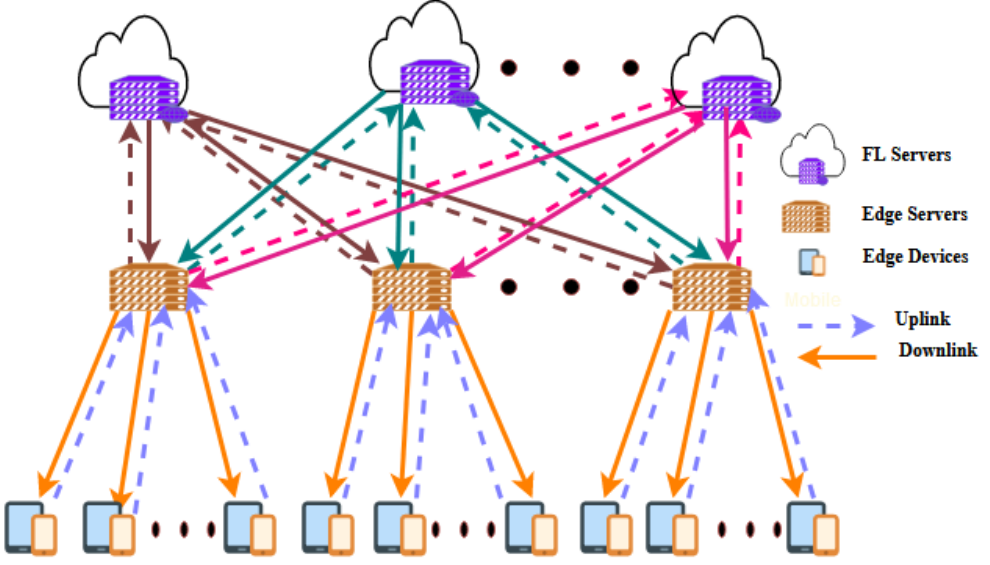}
    \caption{Proposed Three-Tier FL Architecture.}
    \label{fig:FL_Archi}
\end{figure}

As FL becomes increasingly popular, multiple FL processes could be ongoing simultaneously. Hence, in this paper, we extend the client-edge-cloud hierarchy into a more general three-tier architecture, as shown in Figure~\ref{fig:FL_Archi}, where multiple cloud-based FL servers could be at the top tier. The middle-tier edge servers should service the resource requests from the multiple ongoing FL processes led respectively by the multiple FL servers in the cloud. For example, the edge servers should allocate their communication bandwidth to connect the FL servers to their clients, which are end devices at the edge.
Ideally, each FL server should be able to communicate with as many of its clients as possible for fast convergence in learning. However, edge servers only have limited bandwidth and thus may not be able to simultaneously service all the requests from all the FL servers. Hence, it is a research problem how the middle-tier edge servers should allocate their bandwidth to achieve the following two goals: the bandwidth is utilized as fully as possible, and the allocation of bandwidth to cloud servers is as fair as possible. Note that the allocation decision should be made in a distributed manner because the involved parties have their own interests, and it is difficult to obtain their global state. 

To address the above research problem, we propose a game-theoretic approach. Specifically, we define a game to model  the interactions among the three-tier components. Each edge server owns a certain amount of bandwidth and sells them at a certain price to maximize its profit, which is the product of the sold bandwidth units and the price. Meanwhile, each cloud server has a certain amount of funds and uses the fund to purchase bandwidth from all edge servers it connects with so that it can communicate with as many end devices as possible. We formalize each participant's strategies, utilities, and constraints and define and solve the Nash Equilibrium, with which the allocation of the edge servers' bandwidth is optimal because the bandwidth is fully utilized and fairly distributed to cloud servers based on their funds. In addition to the above theoretical study, we also developed a distributed scheme that allows the edge servers and cloud servers to find an allocation of the bandwidth with high utility and fair allocation without any central point of control. Due to the distributed nature, such a scheme should be feasible for practical application.
\par Extensive experiments have been conducted to evaluate the performance of the proposed distributed scheme. The experiments consider various scenarios with heterogeneous conditions and compare to two reference schemes, i.e., the centralized scheme with a hypothetical central point of control and a baseline (distributed) scheme where each edge server simply allocates its bandwidth to requesting cloud servers proportionally to their request sizes. The results show that our proposed distributed scheme performs similarly to the centralized scheme but outperforms the baseline scheme.
A substantial amount of training data from diverse sources is vital for advanced machine learning models~\cite{han2022tiff}. As data is increasingly generated outside the cloud from smartphones, wearables, and IoT sensors, this valuable decentralized data can enhance deep-learning models for more precise and intelligent applications. However, privacy concerns and laws make it difficult for companies to use or integrate user data without explicit consent. To address this, Federated Learning (FL) emerged as a new approach, orchestrating model training across edge devices~\cite{mcmahan2017communication}.
Due to the  prominent features, many FL processes are commonly ongoing at any time. However, FL faces many challenges, including data heterogeneity, communication efficiency, and privacy preservation.
This paper focuses on addressing the challenge of communication efficiency, especially the uplink bandwidth.
Furthermore, the limited communication resources on edge devices, where FL clients locate, pose additional challenges to communication efficiency. When many FL processes are ongoing
concurrently, it further emerges as a new problem as how do we coordinate the simultaneous tasks regarding resource scheduling among the FL clients involved in these FL processes. In networked systems nowadays, edge servers can play an important role in coordination and optimization
as they connect servers and their overlapping pools of clients. Research on FL has typically
assumed a hierarchical architecture where the central server is in the top tier and edge devices are in the bottom tier. In this architecture, a central server can
be a cloud server located deep within the Internet or an
edge device positioned at the network edge. While a cloud server can access many edge devices and train deep learning models using a larger dataset, the communication between the server and the edge devices can be slow and unpredictable due to long and dynamic network
connections. Conversely, communication between the edge server and the edge
devices are more efficient due to short-range communication
and low latency and utilize their limited datasets \cite{liu2020client}. This situation motivates us to trade-offs between cloud and edge server-based architectures, cloud servers with access to extensive training data samples, and edge servers, enabling quick model updates in low latency. 

This paper has proposed a new multi-tier hierarchical architecture in Figure \ref{fig:FL_Archi} to address the uplink bandwidth allocation problem and provided detailed game-theoretic formulation, design, experiments, and system evaluation. The proposed hierarchical structure has multiple FL servers at the top,  multiple edge servers in the middle, and edge devices at the bottom. 
The concurrent operation of multiple federated learning servers improves scalability, load balancing, and fault tolerance and enables geographically distributed collaborations, resulting in enhanced efficiency and resilience of federated learning processes. 

In this paper, a virtual market is built to model and address the bandwidth allocation problem where edge servers compete by selling the bandwidth and making a profit. 
The FL servers compete to acquire as much uplink bandwidth from the edge servers with certain available funds.  This competitive behavior can be seen in an oligopolistic market in which the edge servers and the FL servers are motivated to act based on their self-interests instead of social benefit. In this scenario, we model the interaction between the FL and edge servers to address the bandwidth scheduling problem as a Stackleberg Game~\cite{pu4597505}. 

The major contributions we made through this paper are summarized as follows: 
\begin{itemize}
    \item 
    We define a three-tier hierarchical architecture for concurrent FL systems, where each FL server leads an FL learning process.
    \item 
    To solve the uplink bandwidth allocation problem, 
    we propose a game-theoretic scheme where 
    the interactions among the FL servers and the edge servers 
    are modeled as a Stackleberg game.
    \item 

    We design one centralized and another fully distributed heuristic schemes, to approximately solve the Nash Equilibrium of the Stackleberg game, for the goals of efficient and fair allocation of uplink bandwidth at edge servers for concurrent FL processes. 
    \item 
    The results show that our proposed scheme 
    can fairly distribute uplink bandwidth (in both partial and fully distributed scenarios) in uniform and heterogeneous settings as long as the heterogeneity in resource distribution is not too extreme. 
\end{itemize}

\par To the best of our knowledge, this is the first work that studies the problem of bandwidth allocation at edge servers to support concurrent FL processes in a fully decentralized (distributed) approach.  
\par In the rest of the paper, Section II describes the problem, and Section III presents a game-theoretic formulation of the problem, which is followed by a design of a heuristic scheme in Section IV. Section V reports the experiment results. Related works are briefly discussed in Section VI. Finally, Section VII concludes the paper.

\section{Problem Description}
\label{sec:problem}

\subsection{System Model}

We consider a system composed of three tiers of entities: 
{\em FL servers}, 
{\em edge servers} and 
{\em FL clients}. 
We denote the FL servers as $S_i$ for $i=0,\cdots,s-1$, where 
$s$ is the total number of FL servers, and
let ${\cal S}=\{S_0,\cdots,S_{s-1}\}$.
We denote the edge servers as $E_j$ for $j=0,\cdots,e-1$, where
$e$ is the total number of edge servers, and 
let ${\cal E}=\{E_0,\cdots,E_{e-1}\}$.
Each $S_i$ connects with each $E_j$; this way,
FL servers can utilize edge servers to efficiently interact with their clients, in particular, edge servers can aggregate the model updates from FL clients.
Each $E_j$ connects to $c_j$ edge devices which can work as FL clients for certain FL process, 
while each client only connects to one edge server.
We let $c = \sum_{j=0}^{e-1}c_{j}$ denote the total number of clients in the system.  

Regarding the communication between $E_j$ and its connected edge devices, we assume that
the uplinks (i.e., the communication links from the edge devices to the edge servers) 
have smaller bandwidth than the downlinks (i.e., 
the communication links from the edge server to the edge devices),
which is common for edge networks.
Also, for federated learning, the downlinks are used only by the edge server 
to multicast updated global model while 
the uplinks are used to unicast individual clients' model updates. 
Hence, we focus on the bandwidth allocation for uplinks and 
assume each $E_j$ has limited units of uplink bandwidth denoted as $b_j$.

When a device works as an FL client for FL process $i$
(i.e., the FL process launched by FL server $S_i$),
it requires certain units of bandwidth to upload its model update to the FL server $S_i$ via an Edge server.
We use $u_i$ to denote the bandwidth units required for this purpose.
By default, we assigned $u_i=1$. 

\subsection{Federated Learning Processes}

Each $S_i$ leads an FL process $i$ and acts as the central aggregation server. 
The set of potential clients for FL process $i$ is 
a subset of all the $c$ edge devices denoted as ${\cal C}_i$.
We denote the devices in ${\cal C}_i$ that are connected to each edge server $E_j$
as $C^{i}_{j,k}$ for $k=0,\cdots,c_{i,j}-1$, where
$c_{i,j}$ denotes the total number of such devices.

Each FL process $i$ follows the model of federated averaging (FedAvg)~\cite{mcmahan2017communication} with partial client participation,
where the selection of participating clients is determined by
$S_i$ following the strategy of random selection, specified in the following. 
First, $S_i$ determines a system parameter $\rho$,
the percentage of potential clients it wants to select.
Second, $S_i$ determines the weight of each of the potential clients.
Let each potential client $C^{i}_{j,k}$  
have a set of data samples denoted as $D^{i}_{j,k}$, 
whose size is denoted as $d^{i}_{j,k}=|D^{i}_{j,k}|$.
Each client $C^{i}_{j,k}$ should report $d^{i}_{j,k}$ to $S_i$.
Hence, $S_i$ can compute 
$d_{i}=\sum_{j=0}^{e-1}\sum_{k=0}^{c_{i,j}}d^{i}_{j,k}$ 
as the total size of data samples, and 
$\rho^{i}_{j,k} = \frac{d^{i}_{j,k}}{d^{i}}$ 
as the weight for selecting each client $C^{i}_{j,k}$.

Third, 
for each round, $S_i$ selects participating clients based on the above parameter and weights; 
more specifically, each client $C^{i}_{j,k}$ in ${\cal C}_i$ has the probability of $\rho\cdot\rho^{i}_{j,k}$ to be selected.
Thus, the expected number of such clients, 
is $\bar{c}^{i} = \rho\cdot|{\cal C}_i|$.
We let $\rho=1.0$ by default and thus
$\bar{c}^{i} = |{\cal C}_i|$.

\subsection{Design Goals}

This research has two main design goals.

(i) {\em Bandwidth Utility}: We aim to fully utilize the edge servers' available bandwidth; 
this way, the FL processes can complete as soon as possible. 

(ii) {\em Fairness}: We aim to have each FL processes receive a fair share of the available bandwidth.
Here, fairness has following implications:
for FL servers who invest the same amount of fund, 
they should receive the same or similar amount of bandwidth;
for FL servers who invest different amounts of fund,
they should receive the bandwidth proportionally to their funds.

\section{Game-Theoretic Formulation and Analysis}
\label{sec:scheme}



We apply the game-theoretic approach to formulate and address the afore-described problem. 
Intuitively, the interactions among the FL and the edge servers can be modeled as follows:
Each edge server $E_j$ sells its available uplink bandwidth 
at a unit price denoted as $p_j$.
Each FL server $S_i$ buys certain units of uplink bandwidth from each $E_j$,
and the bandwidth would allow a subset of the clients connected to $E_j$ 
to participate in the FL process $i$. 
The goal of each $E_j$ is to maximize its profit, 
which is the product of the sold units of bandwidth and the unit price.
The goal of each $S_i$ is to maximize its profit, 
which is the percentage of its selected clients that have the bandwidth purchased to 
participate in the FL process $i$. 

All the parties in the game have the aligned goal of 
utilizing as much available bandwidth as possible,
but they may conflict on bandwidth assignment.
If all the FL servers have equal funds for purchasing bandwidth,
the solution of the game provides an allocation that can achieve 
both the bandwidth utility 
and fair allocation of bandwidth; 
if the FL servers have different amounts of fund,
the solution of the game provides an allocation that achieves bandwidth utility 
and priority-based fairness, where 
the priority of a FL server is determined by its amount of funds.

\subsection{Assumption: Competing System}

The bandwidth allocation problem 
is challenging mainly because
FL servers compete for bandwidth at edge servers.
That is, the limited bandwidth available at the edge servers may be requested simultaneously by
multiple FL servers to communicate with their FL clients respectively, 
Such conflicts should be resolved appropriately
to achieve efficient and fair sharing of bandwidth, 
so as to support concurrent FL processes effectively. 
To formulate the competition, 
we introduce the concept of {\em competing system} as follows. 

A system $({\cal S}, {\cal E})$, 
including ${\cal S}$ of $s$ FL servers 
and ${\cal E}$ of $e$ edge servers,
is a competing system if:
    (i) For every $S_i,S_j\in{\cal S}$:
        there exists an edge server $E_w\in{\cal E}$ such that
        $S_i$ and $S_j$ both have at least one selected client connecting to $E_w$; or
        there exists another FL server $S_k\in{\cal S}$ such that,
        $S_i$ and $S_k$ are competing, and $S_j$ and $S_k$ are competing. 
    (ii) For every $E_w\in{\cal E}$, 
    there exist $S_i,S_j\in{\cal S}$ such that
    $S_i$ and $S_j$ both have at least one selected client connecting to $E_w$. 
%

In this paper,
we assume the system that we study is a completing system,
as this is the most challenging setting 
for the bandwidth allocation problem. 
Note that, if an FL server is not part of a competing system,
it is trivial to assign bandwidth to the server; i.e., for each edge server,
this FL server can simply be assigned with either its requested bandwidth
or the bandwidth affordable by the edge server, whichever is smaller.

\subsection{Rules and Strategies of The Game}

Formally, we model the interaction among the FL servers and edge servers as 
a Stackleberg Game~\cite{pu4597505}  
with multiple leaders (i.e., edge servers) and multiple followers (i.e., FL servers). 
Recall that, a Stacklerberg Game models a decision-making system 
that involves a leader and followers taking turns making choices. 
The leader is the first to select its strategy, 
while the followers observe the leader's strategy before making their own decisions. 
Assume $E_j$ has $b_j$ units of bandwidth to sell;
each $S_i$ has a fund of amount $f_i$ used to purchase bandwidth for its clients.   
The game has the following stages and results: 

    
    {\em Stage I}. 
    Each $E_j$ claims a unit price $p_j$ for its bandwidth. 
    Hence, its strategy is 
    \begin{equation}\label{eq:strategy-edge}
        p_j\in {\cal R}^{+},
    \end{equation}
    where ${\cal R}^{+}$ denotes the set of positive real numbers.
    
    {\em Stage II}.
    Each $S_i$ responses to each $E_j$ its request $r_{i,j}$,
    indicating its willingness to buy $r_{i,j}$ units of bandwidth from $E_j$.
    Hence, its strategy is a vector 
    \begin{equation}\label{eq:strategy-fs}
        \vec{r}_{i,\cdot}=(r_{i,0},\cdots,r_{i,e-1})\in\{0,\cdots,b_j\}^{e},
    \end{equation}
    where each $r_{i,j}\in\{0,\cdots,b_j\}$.
    Therefore, a strategy profile of the game can be denoted as
    \begin{equation}\label{eq:strategyprofile-game}
        (p_0,\cdots,p_{e-1};~\vec{r}_{0,\cdot},\cdots,\vec{r}_{s-1,\cdot}).
    \end{equation}
    
    {\em Result}.
    Each $S_i$ is granted with 
        \begin{equation}\label{eq:rule-result}
        x_{i,j} = 
        \lfloor
            r_{i,j}\cdot\min(1,\frac{b_j}{\sum_{i=0}^{s-1}r_{i,j}})
        \rfloor
        \end{equation}
    units of bandwidth from each $E_j$ at unit price $p_j$. Specifically:
        (i) For each $E_j$, if its received requests do not exceed its supply of bandwidth,
        i.e., $\sum_{i=0}^{s-1}r_{i,j}\leq b_{j}$,
        each $S_i$ is granted with $x_{i,j}=r_{i,j}$ units of bandwidth.    
        (ii) Otherwise, for the sake of fairness, 
        each $S_i$ is granted with 
        $x_{i,j}=\lfloor\frac{r_{i,j}\cdot b_j}{\sum_{i=0}^{s-1}r_{i,j}}\rfloor$
        units of bandwidth.    
    
    We denote the total units of bandwidth sold to $S_i$ as 
    \begin{equation}
        x_{i,\cdot} = \sum_{j=0}^{e-1}x_{i,j}
    \end{equation}
    and the total units of bandwidth sold by $E_j$ as
    \begin{equation}
        x_{\cdot,j} = \sum_{i=0}^{s-1}x_{i,j}
    \end{equation}



\subsection{Constraints}

A result of the game is feasible only if the following constraints are satisfied:
    (i) Each $x_{i,j}$ is a multiple of $u_i$. 
    Thus, the bandwidth purchased by $S_i$ from each $E_j$ 
    can be fully used. 
    Particularly, no fragment of bandwidth is wasted as 
    not large enough to connect a client. 
    (ii) $x_{i,j}\leq u_i\cdot\bar{c}^i_j$. That is, 
    the bandwidth purchased by $S_i$ from $E_j$ 
    cannot exceed the total bandwidth needed by 
    all clients of $S_i$ connected to $E_j$. 
    (iii) $x_{\cdot,j}\leq b_j$ at each $E_j$. That is, the bandwidth sold by $E_j$ should not exceed the total bandwidth it owns.
    (iv) For each $S_i$,
    $\sum_{j=0}^{e-1}(p_j\cdot x_{i,j})\leq f_i$. That is, that $S_i$ spends to purchase bandwidth should not exceed $f_i$.

\subsection{Utility Functions}

The goal of each edge server is to gain high income from selling its bandwidth,
and that of each FL server is to purchase as much bandwidth as possible 
for its clients using a given fund. 
Note that, in this game model, a FL server's utility does not factor in
the cost (i.e., the consumption of fund), because the FL server is
expected to purchase as much bandwidth using the available fund. 
Their utilities should quantify the degree of satisfying their goals; also, it is natural that
the marginal increase of utilities should be diminishing. 
Hence, we have the following definitions. 

\begin{definition}\label{def:edge-utility}
(Edge Server's Utility).
The utility of each edge server $E_j$ is 
$u_{E}(p_j,x_{\cdot,j}) = log(1+\frac{p_j\cdot x_{\cdot,j}}{\sum_{i=0}^{s-1}f_i})$.    
\end{definition}
%

\begin{definition}\label{def:fs-utility}
(FL Servers' Utility).
The utility of each FL server $S_i$ is
$u_{S}(x_{i,\cdot}) = log(1+\frac{x_{i,\cdot}}{u_i\cdot\bar{c}^i})$.    
\end{definition}
%

\subsection{Nash Equilibrium}

We first define Nash Equilibrium~\cite{osborne2004introduction} for the FL servers (i.e., the followers in the game),
given a strategy profile of the edge servers (i.e., the leaders in the game).
Then, we define Nash Equilibrium for the whole game.

\begin{definition}\label{def:fs-ne} 
(FL Servers' Nash Equilibrium).
Given a strategy profile  
$(p_0,\cdots,p_{e-1})$
of all edge servers,
a strategy profile 
\begin{equation}\label{eq:FL-sp}
(r^{*}_{0,0},\cdots,r^{*}_{0,e-1}), \cdots, (r^{*}_{s-1,0},\cdots,r^{*}_{s-1,e-1})
\end{equation}
of all FL servers is a Nash Equilibrium iff:
    for every FL server $S_i$ there is no any 
    $(r_{i,0},\cdots,r_{i,e-1})\not=(r^{*}_{i,0},\cdots,r^{*}_{i,e-1})$
    such that,
    letting 
    $x^{*}_{i,\cdot}=(x^{*}_{i,0},\cdots,x^{*}_{i,e-1})$
    be the result of the game 
    where the FL servers' strategy profile is as (\ref{eq:FL-sp}) and
    $x_{i,\cdot}=(x_{i,0},\cdots,x_{i,e-1})$
    be the result of the game
    where the FL servers' strategy profile is changed from 
    (\ref{eq:FL-sp}) by replacing 
    $r^{*}_{i,0},\cdots,r^{*}_{i,e-1}$ with 
    $r_{i,0},\cdots,r_{i,e-1}$, 
    $u_{S}(x_i) > u_{S}(x^{*}_i)$.
\end{definition}

\begin{definition}\label{def:game-ne}
(Game's Nash Equilibrium).
Strategy profile 
\begin{equation}\label{eq:sp-ne}
    p^{*}_0,\cdots,p^{*}_{e-1}; \vec{r^{*}}_{0,\cdot},\cdots,\vec{r^{*}}_{s-1,\cdot}
\end{equation}
is a Nash Equilibrium of the game iff:
    (1) Given the edge servers' strategy profile $(p^{*}_0,\cdots,p^{*}_{e-1})$,
    strategy profile $sp^{*}=(\vec{r^{*}}_{0,\cdot},\cdots,\vec{r^{*}}_{s-1,\cdot})$ 
    is the FL servers' Nash Equilibrium.
    %
    (2) For every edge server $E_j$ there is no $p_j\not=p^{*}_j$ such that:
            (2.1) there exists a FL servers' Nash Equilibrium $sp$ for 
            the edge servers' strategy profile
                \begin{equation}\label{eq:edge-sp-new}
                    (p^{*}_0,\cdots,p^{*}_{j-1},p_{j},p^{*}_{j+1},\cdots,p^{*}_{e-1});
                \end{equation}
            %
            (2.2) letting $x^{*}_{\cdot,j}$ be the game result with strategy profile as (\ref{eq:sp-ne}) and
            $x_{\cdot,j}$ be the game result with 
            the FL servers' strategy profile $sp$ and the edge servers' strategy profile as (\ref{eq:edge-sp-new}),
                \[
                u_{E}(p_j,x_{\cdot,j}) > u_{E}(p^{*}_j,x^{*}_{\cdot,j}).
                \]
\end{definition}

\subsection{Properties of Nash Equilibrium}

\begin{theorem}\label{theo:same-price}
    For any Nash Equilibrium of the game, as denoted by (\ref{eq:sp-ne}),
    it must hold that $p^*_0=\cdots=p^*_{e-1}$.
\end{theorem}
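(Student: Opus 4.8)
The plan is to argue by contradiction: suppose the profile (\ref{eq:sp-ne}) is a Nash Equilibrium in which the prices are not all equal, and exhibit a profitable unilateral price deviation for some edge server, contradicting condition (2) of Definition~\ref{def:game-ne}. Let $p^{*}=\min_j p^{*}_j$ and let $E_a$ be an edge server achieving this minimum; since the prices are not all equal, at least one edge server is strictly more expensive than $E_a$. The deviation I would analyze is $E_a$ raising its price from $p^{*}$ to $p^{*}+\epsilon$ for a suitably small $\epsilon$, and I would show that $E_a$'s revenue $p_a\cdot x_{\cdot,a}$ strictly increases, which by Definition~\ref{def:edge-utility} strictly increases $u_E$.

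Before treating the deviation I would characterize how FL servers respond to a fixed price vector. Because $u_S$ in Definition~\ref{def:fs-utility} is strictly increasing in the total purchased bandwidth $x_{i,\cdot}$, and a unit bought from any edge server contributes equally to $x_{i,\cdot}$, each FL server's best response in an FL Nash Equilibrium (Definition~\ref{def:fs-ne}) fills its cheapest available options first: it never spends on a strictly more expensive edge server while it can still obtain an additional unit from a minimum-price server at which it has both unsaturated client capacity (constraint (ii)) and remaining budget (constraint (iv)). I would use this to show that at the assumed equilibrium $E_a$ is either oversubscribed or its demand is pinned only by the capacity/budget limits of the FL servers contesting it, which by the competing-system assumption number at least two.

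The core step is to show the price increase does not lose enough volume to reduce revenue. If $E_a$ is oversubscribed, i.e.\ $\sum_i r_{i,a}>b_a$, then rule (\ref{eq:rule-result}) pins $x_{\cdot,a}$ at $b_a$ (up to the floor), and this persists under an infinitesimal increase, so revenue rises to $(p^{*}+\epsilon)\,b_a$. Otherwise $E_a$'s buyers are capacity- or budget-limited: a capacity-limited buyer still wants all its clients at $E_a$ connected, so its demanded quantity is inelastic near $p^{*}$ and revenue rises; a budget-limited buyer that single-homes at $E_a$ spends exactly its fund there regardless of price, leaving that part of the revenue unchanged. The delicate subcase is a budget-limited buyer that might cut its purchase at $E_a$ when the price rises; here I would argue that such a buyer is simultaneously purchasing from a strictly more expensive edge server and hence prefers to absorb the budget tightening by cutting there, leaving $x_{\cdot,a}$ unchanged. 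Finally, to exclude a local-monopoly configuration in which a low-priced cluster is insulated from the rest, I would invoke the competing-system assumption: condition (ii) forces every edge server to be contested by at least two FL servers, and condition (i) makes the competition graph connected, so the local price-matching forced whenever a single FL server multihomes across two edge servers propagates along the graph to yield $p^{*}_0=\cdots=p^{*}_{e-1}$.

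I expect the main obstacle to be the interaction of the integer floor in (\ref{eq:rule-result}) with the budget constraint (iv): because granted quantities move in discrete steps, the infinitesimal-$\epsilon$ reasoning must be replaced by an explicit discrete exchange argument that simultaneously exhibits a feasible deviating FL Nash Equilibrium (satisfying constraints (i)--(iv)) and certifies the revenue gain. Converting the intuitive law-of-one-price arbitrage into the global conclusion through the competing-system connectivity, while keeping every intermediate allocation feasible, is the step I would treat with the most care.
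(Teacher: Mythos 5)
You should first note that the paper never actually writes out a proof of Theorem~\ref{theo:same-price}: it explicitly omits the proofs of Theorems 1--3, saying only that they "can be proved by contradiction based on the definitions of the competing systems and the game formulation." Your plan is consistent with that hint in outline (contradiction, a unilateral price deviation, the competing-system assumption), so there is no paper proof to diverge from; the question is whether your plan closes, and it does not. The gap sits exactly where you flagged it as delicate. Your revenue argument for the deviation $p^{*}\mapsto p^{*}+\epsilon$ by the cheapest server $E_a$ rests on the claim that a budget-limited buyer single-homing at $E_a$ "spends exactly its fund there regardless of price." Under the game's integrality constraints this is false: grants are multiples of $u_i$ and must satisfy $\sum_j p_j x_{i,j}\leq f_i$, so if $f_i = p^{*}x_{i,a}$ exactly (which is fully compatible with an FL servers' Nash Equilibrium and with the condition of Theorem~\ref{theo:price-constraint}), then for every $\epsilon>0$ the largest affordable purchase drops to $x_{i,a}-u_i$, and $E_a$ loses roughly $p^{*}u_i$ of revenue from that buyer. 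Your escape route---the buyer cuts at a strictly more expensive server instead---is available only to multi-homing buyers, so it cannot repair this case; and the same tight-budget effect also breaks your "oversubscribed" case, since grants (not requests) must remain affordable.

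A concrete instance shows the failure is real and not a technicality. Take $u_i=1$ throughout; $E_a$ with $b_a=10$ at price $1$, $E_m$ with $b_m=10$ at price $2$; $S_1,S_2$ each with $5$ clients only at $E_a$ and funds $f_1=f_2=3$; $S_3,S_4$ each with $2$ clients at $E_a$, $10$ at $E_m$, and funds $12$. In the follower equilibrium, $S_1,S_2$ buy $3$ units each and $S_3,S_4$ buy $2$ each at $E_a$ (selling out $b_a$), and $S_3,S_4$ buy $5$ each at $E_m$; this is a competing system, and \emph{every} price change of the form $1\pm\epsilon$ with small $\epsilon$ strictly lowers $E_a$'s revenue, because $S_1$ and $S_2$ each drop a whole unit on any increase. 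The equilibrium property is violated here only by a \emph{discrete} jump, e.g.\ $p_a=1.5$, at which $S_1,S_2$ can still afford two units each and there exists a follower equilibrium keeping $S_3,S_4$ at their two capacity-limited units, giving revenue $8\times 1.5=12>10$. So a correct proof must search over non-infinitesimal deviations aligned with the lattice of affordable quantities $f_i/(k\,u_i)$---a genuinely different argument from the infinitesimal one your plan is organized around, and one you deferred rather than supplied. Separately, your closing connectivity step is also soft: the competing-system relation links edge servers through \emph{pairs of FL servers} sharing client pools, which does not by itself produce an FL server that actually multi-homes across two differently-priced edge servers, so "local price matching" does not propagate without an additional argument.
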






\begin{theorem}\label{theo:price-constraint}
    For any Nash Equilibrium of the game, as denoted by (\ref{eq:sp-ne}),
    where $p^*_0=\cdots=p^*_{e-1}=p$, the following must hold:
    for every FL server $S_i$, who has purchased $x^*_{i,\cdot}$ where $\frac{x^*_{i,\cdot}}{u_i}<\bar{c}^{i}$, in the result of the game,
    $(x^*_{i,\cdot}+u_i)\cdot p > f_i$. It cannot purchase more bandwidth to support one more client.  
\end{theorem}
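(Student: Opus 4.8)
The plan is to argue by contradiction, exploiting that the follower utility $u_{S}$ is strictly increasing in the total purchased bandwidth $x_{i,\cdot}$, so at a Nash Equilibrium each $S_i$ must already be buying as much usable bandwidth as it can afford. First I would record the consequence of being at an equilibrium of the form (\ref{eq:sp-ne}) with $p^*_0=\cdots=p^*_{e-1}=p$ (Theorem~\ref{theo:same-price}): since every edge server charges the same price, $S_i$'s expenditure collapses to $\sum_{j} p\cdot x^*_{i,j}=p\cdot x^*_{i,\cdot}$, so the budget constraint (iv) reads simply $p\cdot x^*_{i,\cdot}\le f_i$. The target inequality $(x^*_{i,\cdot}+u_i)\cdot p > f_i$ is thus exactly the assertion that the budget is tight up to one client, and I would assume for contradiction that instead $(x^*_{i,\cdot}+u_i)\cdot p \le f_i$, i.e. $S_i$ has enough remaining fund to pay for $u_i$ more units at price $p$.

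Under this assumption I would locate a place to spend it. By hypothesis $\frac{x^*_{i,\cdot}}{u_i}<\bar c^i=\sum_j \bar c^i_j$, so summing the per-edge-server demand bound (ii), $x^*_{i,j}\le u_i\bar c^i_j$, the strict aggregate inequality forces some $E_j$ with $x^*_{i,j}<u_i\bar c^i_j$; since $x^*_{i,j}$ is a multiple of $u_i$ (constraint (i)) this gives $x^*_{i,j}\le u_i(\bar c^i_j-1)$, leaving room for $u_i$ more units there. I would then construct the deviation in which $S_i$ raises only its request $r_{i,j}$ to this one edge server and leaves every other $r_{i,j'}$ unchanged. Because the rationing rule (\ref{eq:rule-result}) at $E_{j'}$ depends only on the requests submitted to $E_{j'}$, all of $S_i$'s other grants $x_{i,j'}$ are unaffected, so it suffices to show the single grant $x_{i,j}$ can be pushed up by at least $u_i$.

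The heart of the argument, and the step I expect to be the main obstacle, is establishing this last point from the proportional-floor rule while the other FL servers' requests stay fixed. I would treat $x_{i,j}=\lfloor r_{i,j}\cdot\min(1,\frac{b_j}{r_{i,j}+R_{-i}})\rfloor$, with $R_{-i}=\sum_{i'\ne i}r^*_{i',j}$ held constant, as a function of $r_{i,j}$ alone: it is non-decreasing, and in the oversubscribed regime it climbs toward $b_j$ as $r_{i,j}$ grows, so $S_i$ can realize grants up to $b_j$ (and up to its demand cap $u_i\bar c^i_j$) by unilaterally displacing the fixed competitors. The delicate part is the integrality requirement (i): I must guarantee that a whole increment of $u_i$ units, not a forbidden sub-$u_i$ fragment, is attainable, and handle the supply-limited corner case in which $S_i$ already holds essentially all of $b_j$ (so $b_j-x^*_{i,j}<u_i$) at every edge server where it still has unserved clients, so that the obstruction is supply rather than fund. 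I would dispose of that corner using the competing-system assumption (ii) together with the equilibrium consistency of the co-located FL servers to argue that such monopolization cannot coexist with leftover purchasable bandwidth, thereby forcing the residual fund below $u_i\cdot p$ and recovering the inequality directly. In the generic case the deviation raises $x_{i,\cdot}$ by $u_i$ within budget, strictly increasing $u_{S}$ and contradicting that (\ref{eq:sp-ne}) is a Nash Equilibrium; the theorem follows.
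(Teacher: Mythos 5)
Your overall plan (proof by contradiction from the game rules and the competing-system assumption) matches the strategy the paper gestures at: the paper in fact provides no proof of this theorem, saying only that it ``can be proved by contradiction based on the definitions of the competing systems and the game formulation.'' So there is no detailed paper proof to match against; the question is whether your attempt closes, and it does not.

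The decisive gap is in the step you yourself call the heart of the argument. You claim that, holding the other servers' requests $R_{-i}$ fixed, the grant $x_{i,j}=\lfloor r_{i,j}\cdot\min(1,\,b_j/(r_{i,j}+R_{-i}))\rfloor$ from (\ref{eq:rule-result}) ``climbs toward $b_j$ as $r_{i,j}$ grows,'' so that $S_i$ can displace competitors and realize grants up to $b_j$. That is only true if $r_{i,j}$ is unbounded, but the strategy space (\ref{eq:strategy-fs}) caps every request at $b_j$. Under the cap, the largest grant $S_i$ can force unilaterally is $\lfloor b_j^2/(b_j+R_{-i})\rfloor$, which may already be attained at equilibrium. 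Concretely, take $b_j=10$, $u_i=1$, and four competitors each requesting $10$ at $E_j$ (so $R_{-i}=40$): even at the maximal request $r_{i,j}=10$, $S_i$ receives $\lfloor 100/50\rfloor=2$ units, and no unilateral change increases this, no matter how much unspent fund and how many unserved clients $S_i$ has. This congested regime is not a corner case; the competing-system assumption guarantees every edge server is requested by at least two FL servers, so it is exactly the situation the theorem concerns. This exposes a deeper problem with the route you chose: a follower profile can be a mutual best response while budget slack remains, so a pure follower-deviation contradiction cannot establish the theorem; any correct proof must also invoke clause (2) of Definition~\ref{def:game-ne} (a leader facing buyers with slack could profitably change its price), which your proposal never uses. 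Separately, your disposal of the supply-limited corner --- ``competing-system assumption plus equilibrium consistency of co-located FL servers'' --- is an intention rather than an argument, and it targets monopolization ($b_j-x^*_{i,j}<u_i$) rather than the binding request cap, which is the case that actually breaks the deviation; the multiple-of-$u_i$ feasibility of the increased grant is likewise flagged but never resolved.
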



\begin{theorem}\label{theo:price-minimiality}
    For any Nash Equilibrium of the game,
    the price offered by the edger servers should be the lowest one that meets the requirement specified in the previous theorem. 
\end{theorem}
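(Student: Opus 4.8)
The plan is to pin down the equilibrium price from both sides by combining the two previous theorems. By Theorem~\ref{theo:same-price} every Nash Equilibrium uses a single uniform price $p$, and by Theorem~\ref{theo:price-constraint} this $p$ already \emph{meets} the requirement that every under-satisfied FL server has exhausted its fund (cannot afford one more unit $u_i$). Hence $p$ is at least as large as $p_{\min}$, the lowest price meeting that requirement, and it remains only to rule out $p>p_{\min}$. I would do this by showing that at any price strictly above $p_{\min}$ some edge server has unsold bandwidth, and that such an edge server has a profitable unilateral deviation, contradicting Definition~\ref{def:game-ne}.

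First I would characterize $p_{\min}$ as the market-clearing price at which the FL servers' budget constraint and the edge servers' supply constraint bind simultaneously. Writing $B=\sum_j b_j$, below $p_{\min}$ the affordable demand $\sum_i f_i/p$ exceeds $B$, so under the proportional allocation rule~(\ref{eq:rule-result}) total spending is only $p\cdot B<\sum_i f_i$; some under-satisfied FL server then retains unspent fund, violating the requirement of Theorem~\ref{theo:price-constraint}. Thus $p_{\min}$ is exactly the price where all of $B$ is sold with the under-satisfied servers' funds just binding, giving $p_{\min}=\sum_i f_i/B$ in the regime where bandwidth (rather than client count) is the bottleneck.

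Next, for the contradiction, I would assume $p>p_{\min}$. Because the equilibrium meets the Theorem~\ref{theo:price-constraint} requirement, every under-satisfied FL server spends essentially all of $f_i$, so the total units sold is at most $\sum_i f_i/p<\sum_i f_i/p_{\min}=B$; therefore at least one edge server $E_j$ sells $x_{\cdot,j}<b_j$. I would then let $E_j$ deviate to $p_j=p-\epsilon$. Since each FL server's utility (Definition~\ref{def:fs-utility}) is strictly increasing in the bandwidth obtained while its fund is fixed, every FL server strictly prefers to reallocate spending toward the now-cheapest server $E_j$: dropping a unit bought elsewhere frees enough budget to buy strictly more than a unit at $E_j$. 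In the resulting follower Nash Equilibrium $E_j$'s sales rise from $x_{\cdot,j}$ toward $\min(b_j,\sum_i u_i\bar{c}^{i}_{j})$, and for $\epsilon$ small enough its revenue $(p-\epsilon)\cdot x^{\mathrm{new}}_{\cdot,j}$ strictly exceeds $p\cdot x_{\cdot,j}$; since $u_E$ is increasing in revenue this is a profitable deviation, the contradiction we need.

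The main obstacle will be the third step: rigorously establishing that a small price cut strictly increases the deviating server's sales. This requires (a) invoking the competing-system assumption so that $E_j$ receives overlapping demand from at least two FL servers and its client-connection capacity $\sum_i u_i\bar{c}^{i}_{j}$ strictly exceeds its pre-deviation sales $x_{\cdot,j}$ (otherwise the bottleneck at $E_j$ is clients, not bandwidth, and a price cut cannot help), and (b) analyzing the follower best responses under the integer/floor allocation in~(\ref{eq:rule-result}) to confirm that the reallocation is actually realized in \emph{some} follower Nash Equilibrium, since Definition~\ref{def:game-ne} only demands the existence of one such profile. Handling the floor functions and the boundary case where $p_{\min}$ does not divide the funds evenly is the routine-but-delicate part I would treat last.
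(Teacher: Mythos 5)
The first thing to note is that the paper contains no proof of this theorem to compare against: immediately after Theorems~\ref{theo:same-price}--\ref{theo:price-minimiality} the authors write that, due to space limits, no proofs are given, offering only the hint that the theorems ``can be proved by contradiction based on the definitions of the competing systems and the game formulation.'' Your overall shape --- lower-bound the equilibrium price via Theorem~\ref{theo:price-constraint}, then rule out any higher price by exhibiting a profitable unilateral price cut for an edge server with unsold bandwidth, invoking the competing-system assumption --- is exactly the kind of contradiction argument the authors gesture at, so the plan is in the right spirit. It must, however, be judged on its own merits, and there it has two genuine gaps.

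First, the identification $p_{\min}=\sum_i f_i/B$ is not a boundary nuisance to be ``treated last''; it is where the argument breaks. Because the requirement of Theorem~\ref{theo:price-constraint} carries the slack term $u_i$ (a server need only be unable to afford \emph{one more} unit), there is a whole interval of prices strictly below $\sum_i f_i/B$ --- roughly down to $\sum_i f_i/(B+\sum_i u_i)$ --- that also meet the requirement while every unit of bandwidth is sold. Concretely, take $s=e=2$, $f_i=1$, $u_i=1$, $b_j=10$, $\bar{c}^i_j=10$: for every $p\in(1/11,\,1/10]$ the allocation $x_{i,\cdot}=10$ satisfies all game constraints and the requirement, all $B=20$ units are sold, and the ``lowest price meeting the requirement'' is an unattained infimum near $1/11$. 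Your chain ``$p>p_{\min}\Rightarrow$ total sold $\leq \sum_i f_i/p < B \Rightarrow$ some $E_j$ has unsold bandwidth'' is false on $(1/11,1/10]$, which is precisely the range where minimality has to be argued; what your deviation argument can at best rule out is $p>\sum_i f_i/B$, i.e., you would be proving ``the equilibrium price is the market-clearing price,'' not the literal statement about the lowest requirement-meeting price. Second, even where an $E_j$ with $x_{\cdot,j}<b_j$ exists, the price-cut deviation requires some FL server with \emph{uncovered} selected clients at that same $E_j$ (otherwise the constraint $x_{i,j}\leq u_i\bar{c}^i_j$ caps its sales and cutting the price strictly lowers revenue). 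The competing-system assumption only guarantees that two FL servers have selected clients connecting to $E_j$, not that any of them are uncovered; the unsold supply and the under-satisfied servers' unmet demand may sit at different edge servers. You flag both issues as obstacles, but they are not routine clean-up --- they are the substance of the proof, and the second in particular needs an argument that co-locates unsold supply with unmet, affordable demand.
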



Due to space limits,
we do not provide proof for the above theorems.
These theorems can be proved by contradiction based on the definitions of the competing systems and the game formulation. 

\subsection{Solution of Nash Equilibrium}

Based on the game formulation and analysis in the last section,
a Nash Equilibrium can be computed by 
solving the following optimization problem. 
\begin{eqnarray}
&&\min~p \label{eq:ipp-objective}\\
&&s.t. \nonumber\\
&&p>0\label{eq:ipp-positive-price}\\
&&x_{i,j}\leq \bar{c}^i_j\cdot u_i,~\forall~i,j\label{eq:ipp-c-1}\\
&&x_{\cdot,j}\leq b_j,~\forall~j\label{eq:ipp-c-2}\\
&&p\cdot x_{i,\cdot}\leq f_i \label{eq:ipp-c-3}\\
&&x_{i,\cdot}<\bar{c}^i\cdot u_i~\implies~f_i<p\cdot (x_{i,\cdot}+u_i),~\forall~i\label{eq:ipp-c-4}
\end{eqnarray}
Here, all the edge servers claim the same price denoted as $p$, 
due to Theorem~\ref{theo:same-price}. 
Constraints (\ref{eq:ipp-c-1}), (\ref{eq:ipp-c-2}) and (\ref{eq:ipp-c-3}) are due to the rules of the game.
Constraint (\ref{eq:ipp-c-4}) is due to Theorem~\ref{theo:price-constraint}. 
The objective (\ref{eq:ipp-objective}) is due to Theorem~\ref{theo:price-minimiality}.

\begin{theorem}\label{theo:solution-definition}
    The solution to optimization problem 
    defined by (\ref{eq:ipp-objective})-(\ref{eq:ipp-c-4})
    is a Nash Equilibrium of the game. 
\end{theorem}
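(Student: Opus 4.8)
The plan is to take an optimal solution $(p,\{x^*_{i,j}\})$ of the program (\ref{eq:ipp-objective})--(\ref{eq:ipp-c-4}) and exhibit it as a strategy profile satisfying both clauses of Definition~\ref{def:game-ne}. First I would realize the allocation as an actual profile: by Theorem~\ref{theo:same-price} I may set every edge price to the common value $p$, and I let each FL server request exactly $r^*_{i,j}=x^*_{i,j}$. Because constraint (\ref{eq:ipp-c-2}) gives $\sum_i r^*_{i,j}=x^*_{\cdot,j}\le b_j$ at every edge, the rationing rule (\ref{eq:rule-result}) is never triggered and returns $x_{i,j}=r^*_{i,j}=x^*_{i,j}$, so this profile reproduces the intended allocation. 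It then remains to check the two equilibrium clauses.

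For clause (1), the FL servers' equilibrium, I would use that $u_S$ is strictly increasing in the total purchased bandwidth $x_{i,\cdot}$, so a profitable deviation for $S_i$ would have to raise $x_{i,\cdot}$ while keeping the result feasible. With all prices equal to $p$, feasibility forces the budget bound (\ref{eq:ipp-c-3}), i.e. $x_{i,\cdot}\le f_i/p$, no matter how $S_i$ redistributes its requests across edges, including grabbing a larger share at a saturated edge through the rationing rule. Constraint (\ref{eq:ipp-c-4}) (which is Theorem~\ref{theo:price-constraint}) says that whenever $S_i$ has not yet served all its clients, $f_i<p\,(x^*_{i,\cdot}+u_i)$, so $x^*_{i,\cdot}$ is already the largest multiple of $u_i$ the budget permits; and when $x^*_{i,\cdot}=\bar c^i u_i$ the capacity bounds (\ref{eq:ipp-c-1})--(\ref{eq:ipp-c-2}) forbid any increase. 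Either way no feasible deviation raises $x_{i,\cdot}$, so clause (1) holds.

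The substantive work is clause (2): no edge $E_j$ can move its price to $p_j\ne p$ and strictly raise its income $p_j x_{\cdot,j}$ at any FL equilibrium of the new price vector. I would first record a consequence of price minimality (Theorem~\ref{theo:price-minimiality}): at the optimum every edge is fully utilized, $x^*_{\cdot,j}=\min(b_j,\sum_i \bar c^i_j u_i)$, since an edge carrying both spare bandwidth and an unserved client would let us lower $p$ while keeping every budget tight, contradicting optimality. For a downward deviation $p_j<p$ the quantity sold cannot exceed $x^*_{\cdot,j}$ (supply and client capacities are unchanged and no extra client can be served at $E_j$), so the income strictly drops. For an upward deviation $p_j>p$ I would argue demand is elastic: the other edges keep the price $p$ and are already at their utilization caps, so they absorb all demand that can shift to them, leaving the FL servers only residual budget to spend at $E_j$. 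Since at the optimum the money reaching the other edges is $\sum_{k\ne j}p\,x^*_{\cdot,k}$ and budgets are exhausted there, the residual spent at $E_j$ is at most $p\,x^*_{\cdot,j}$, whence $p_j x_{\cdot,j}\le p\,x^*_{\cdot,j}$ and the income does not increase. The competing-system assumption enters precisely here, guaranteeing the market is one connected block so that the shifted demand really lands on edges still priced at $p$. Because this shows clause (2.2) can never be met for any $p_j\ne p$, clause (2) follows regardless of whether an FL equilibrium (2.1) exists.

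I expect the upward-price case to be the main obstacle. The aggregate "residual-budget" bound is transparent under frictionless reallocation, but the per-edge client-capacity constraints (\ref{eq:ipp-c-1}), the integer floors in (\ref{eq:rule-result}), and the presence of client-saturated FL servers (whose spending need not be pinned by (\ref{eq:ipp-c-4})) mean that a particular FL server may be unable to move its demand off $E_j$, and its outlay after the deviation need not match its outlay at the optimum. So the delicate step is to show that in \emph{every} FL equilibrium of the deviated price vector the aggregate money reaching $E_j$ still cannot exceed $p\,x^*_{\cdot,j}$, carefully accounting for capacity-limited versus supply-limited edges and for the rounding in the rationing rule. Once that accounting is in place, both clauses are verified and the optimal solution is a Nash Equilibrium.
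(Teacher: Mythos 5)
Your conversion step and clause (1) coincide with the paper's own (sketch) proof: the paper likewise sets all prices equal to $p$, realizes the allocation $\{x_{i,j}\}$ through requests consistent with (\ref{eq:rule-result}), and rules out FL-server deviations by noting that any utility gain requires an increase of $x_{i,\cdot}$ by at least $u_i$, which would force $x_{i,\cdot}<\bar c^i\cdot u_i$ together with $f_i\ge p(x_{i,\cdot}+u_i)$, contradicting (\ref{eq:ipp-c-4}). Where you part ways with the paper is clause (2): the paper disposes of edge-server deviations in one line by appealing to Theorem~\ref{theo:same-price} (a unilateral price change makes prices unequal, hence by that theorem the resulting profile cannot be an equilibrium), whereas you attempt a self-contained revenue comparison. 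Your route is more honest about what actually has to be shown --- the paper's citation addresses whether the deviated profile is an equilibrium, not whether the deviation is profitable, which is what Definition~\ref{def:game-ne} clause (2.2) asks --- but it is precisely there that your proof has a genuine gap.

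The inequality you need for upward deviations, namely that in every FL servers' equilibrium under prices $(p,\dots,p_j,\dots,p)$ with $p_j>p$ the revenue of $E_j$ satisfies $p_j\,x_{\cdot,j}\le p\,x^*_{\cdot,j}$, is not merely ``delicate''; under the paper's stated definitions it can fail outright. Constraint (\ref{eq:ipp-c-4}) is a strict inequality, so each FL server may retain a positive budget slack of up to $p\,u_i$, and the FL servers' utility (Definition~\ref{def:fs-utility}) does not charge for expenditure. Consequently, if $E_j$ raises its price slightly, the \emph{original} allocation typically remains budget-feasible and remains a best response for every FL server (no server can afford $u_i$ additional units anywhere, exactly as before the deviation), so it is again an FL servers' Nash Equilibrium --- and $E_j$'s revenue becomes $p_j\,x^*_{\cdot,j}>p\,x^*_{\cdot,j}$, a strictly profitable deviation. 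Concretely: two FL servers, two edges, $b_j=10$, $u_i=1$, $\bar c^i_j=10$, $f_i=1.1$, $p=0.1001$, $x_{i,j}=5$ for all $i,j$ satisfies (\ref{eq:ipp-positive-price})--(\ref{eq:ipp-c-4}); if $E_0$ raises its price to $0.105$, the allocation $(5,5)$ per server still costs only $1.0255\le 1.1$, no server can reach $11$ units within budget even by inflating requests at a saturated edge, and $E_0$'s income rises from $1.001$ to $1.05$. So the residual-budget accounting you flag as the ``delicate step'' cannot be completed by sharper bookkeeping: the claim itself is wrong without additional assumptions that eliminate budget slack (or a utility model in which FL servers are penalized for overpaying). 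Your instinct that the upward-price case is the crux was exactly right; but where the paper papers over it with an appeal to the (unproven) Theorem~\ref{theo:same-price}, your substitute argument breaks at that point.
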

\begin{proof} (sketch)
The proof is based on Definitions \ref{def:fs-ne} and \ref{def:game-ne} 
as well as Theorem \ref{theo:same-price}.

A solution to optimization problem (\ref{eq:ipp-objective})-(\ref{eq:ipp-c-4}),
$\langle p, \{x_{i,j}\}\rangle$, can be converted to 
a combination of a strategy profile
    \begin{equation}
    p^*_0,\cdots,p^*_{e-1}; \vec{r^*}_{0,\cdot},\cdots,\vec{r^*}_{s-1,\cdot}
    \end{equation}
based on the rules of the game.    
Here, $p^*_0=\cdots=p^*_{e-1}=p$, 
and each $x_{i,j}$ ($i\in\{0,\cdots,s-1\}$ and $j\in\{0,\cdots,e-1\}$) and $r^*_{i,j}$
satisfy equation (\ref{eq:rule-result}).

With proof by contradiction, 
it can be shown that the above strategy profile is a Nash Equilibrium of the game. 
First, when the strategy of the edge servers is $p^*_0=\cdots=p^*_{e-1}=p$,
none of the FL server $S_i$ can deviate from its strategy $r^*_{i,\cdot}$
due to the following reasons:
If the change of $S_i$ leads to an increase of 
its bandwidth assignment $x_{i,\cdot}$ by at least $u_i$,
this is impossible because this case implies that
$x_i<\vec{c}^i\cdot u_i$ and meanwhile $f_i\geq p(x_i+u_i)$,
which violates the condition of Equation (\ref{eq:ipp-c-4}).
Thus, no change can $S_i$ make to increase its utility. 
Second, when one edge server $e_j$ changes its strategy, 
from the above strategy profile,
i.e., the change makes $p_j\not p$ while $p_k=p$ for every $k\not=j$,
the resulting new strategy profile should not be a Nash Equilibrium
according to Theorem \ref{theo:same-price}.
\end{proof}

\begin{corollary}
    Nash Equilibrium exists for the game.
\end{corollary}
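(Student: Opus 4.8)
The plan is to deduce the corollary directly from Theorem~\ref{theo:solution-definition}: since every optimal solution $\langle p,\{x_{i,j}\}\rangle$ of the optimization problem (\ref{eq:ipp-objective})--(\ref{eq:ipp-c-4}) induces a Nash Equilibrium of the game, it suffices to prove that this problem attains its minimum. I would therefore split the argument into two claims: (a) the feasible region is nonempty, and (b) the infimum of the objective $p$ over the feasible region is achieved by some feasible point.

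For (a), I would exhibit an explicit feasible point. Taking the all-zero allocation $x_{i,j}=0$ together with any price $p>\max_i f_i/u_i$ satisfies (\ref{eq:ipp-c-1})--(\ref{eq:ipp-c-3}) trivially, while for every $i$ we have $x_{i,\cdot}=0<\bar c^i u_i$ and $f_i<p\cdot u_i=p\,(x_{i,\cdot}+u_i)$, so (\ref{eq:ipp-c-4}) holds as well. Hence the feasible region is nonempty and the infimum $p^*$ of the objective is well defined and nonnegative. I would next argue $p^*>0$: if $p$ were arbitrarily small, constraint (\ref{eq:ipp-c-4}) would force every FL server to be fully served (its antecedent could not hold), so $x_{\cdot,j}=\sum_i \bar c^i_j u_i$ at each $E_j$, contradicting (\ref{eq:ipp-c-2}) because at a contested edge server the combined demand of the competing FL servers exceeds its supply $b_j$, which is the defining scarcity of the competing-system assumption. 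This same remark shows that every feasible allocation leaves at least one FL server unsatisfied.

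For (b), I would exploit integrality. By (\ref{eq:ipp-c-1}) each $x_{i,j}$ is a nonnegative multiple of $u_i$ bounded by $\bar c^i_j u_i$, so there are only finitely many candidate allocations. For a fixed allocation $A$, constraints (\ref{eq:ipp-c-3}) and (\ref{eq:ipp-c-4}) confine the price to the interval $(\ell_A,r_A]$, where $r_A=\min_{i:\,x_{i,\cdot}>0} f_i/x_{i,\cdot}$ and $\ell_A=\max_{i:\,x_{i,\cdot}<\bar c^i u_i} f_i/(x_{i,\cdot}+u_i)$, the latter being well defined and positive by the remark above. The feasible price set is thus the finite union $\bigcup_A (\ell_A,r_A]$ and $p^*=\min_A \ell_A$.

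The main obstacle is that each interval is open at its lower end, so $p^*=\ell_{A^*}$ is \emph{not} realized by the minimizing allocation $A^*$ itself; I must produce a different feasible allocation whose half-open interval contains $p^*$. The key observation is that at $p^*$ every binding server $S_{i_0}$ (one achieving $f_{i_0}/(x_{i_0,\cdot}+u_{i_0})=p^*$) can afford exactly $u_{i_0}$ additional units, so I would construct $A'$ by granting each such server one more unit of bandwidth at an edge server where it still has unmet client demand. A direct check then shows that in $A'$ all the binding thresholds drop strictly below $p^*$ while (\ref{eq:ipp-c-3}) continues to hold with $r_{A'}=p^*$, so $p^*\in(\ell_{A'},r_{A'}]$ and the minimum is attained. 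The delicate point, which I expect to be the crux, is guaranteeing that these simultaneous single-unit augmentations are \emph{realizable} under the per-edge capacity constraints (\ref{eq:ipp-c-1}) and (\ref{eq:ipp-c-2}); here I would use an augmenting/exchange argument on the bipartite FL-server--edge-server allocation, invoking the connectivity built into the definition of a competing system, to reroute existing bandwidth and free the capacity needed for the augmentation. Establishing that this reallocation is always possible is the real content of the proof; the remaining verifications are routine.
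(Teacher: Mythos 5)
Your step (a) is, in essence, the paper's entire proof: the paper also reduces the corollary to Theorem~\ref{theo:solution-definition} and then exhibits the all-zero allocation as a feasible point. In fact your witness is more careful than the paper's, which takes $p=\max_i f_i/u_i$ and thereby violates constraint (\ref{eq:ipp-c-4}) for the maximizing $i$, since that constraint is strict; your $p>\max_i f_i/u_i$ fixes this. The paper's sketch stops there --- it silently equates ``the feasible set is nonempty'' with ``the minimum is attained,'' and never raises the attainment question that your step (b) tries to settle.

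Step (b), however, contains a genuine gap, and it cannot be repaired. Your own interval decomposition refutes the plan: the feasible price set is a finite union $\bigcup_A(\ell_A,r_A]$ of left-open intervals, and such a set can never contain its infimum --- every feasible $p$ lies in some $(\ell_A,r_A]$ and hence satisfies $p>\ell_A\geq\min_{A}\ell_{A}=p^*$. Equivalently, the allocation $A'$ you hope to construct would be a realizable allocation with a nonempty price interval and $\ell_{A'}<p^*$, contradicting the very minimality that defines $p^*$. So the ``augmenting/exchange argument'' you defer as the crux is not delicate but impossible: whenever a binding server can be given one more unit, directly or after rerouting, $A^*$ was not price-minimal to begin with. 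Concretely, take two FL servers and one edge server (a competing system per the paper's definition), $f_0=f_1=1$, $u_0=u_1=1$, $\bar{c}^0_0=\bar{c}^1_0=2$, $b_0=2$: the feasible price set is $(1/2,1]\cup(1,\infty)$, whose infimum $1/2$ is feasible for no allocation, and augmenting the binding allocation $(1,1)$ would need two extra units against zero spare capacity. Note also that your appeal to ``the defining scarcity of the competing-system assumption'' misreads the paper: the definition only requires overlapping client sets at edge servers, not that demand exceed supply. What your analysis actually exposes is a defect in the paper's formulation rather than a provable lemma: with the strict inequality in (\ref{eq:ipp-c-4}), problem (\ref{eq:ipp-objective})--(\ref{eq:ipp-c-4}) never attains its minimum, so attainment cannot be established by any rerouting argument; one would have to amend the model first (e.g., relax (\ref{eq:ipp-c-4}) to a non-strict inequality or restrict prices to a discrete set), which is presumably why the paper's own proof, being a sketch, does not confront the issue at all.
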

\begin{proof} (sketch)
According to Theorem \ref{theo:solution-definition},
a solution to the optimization problem (\ref{eq:ipp-objective})-(\ref{eq:ipp-c-4})
is a Nash Equilibrium; that is, 
if there is a solution to the optimization problem, the game should have Nash Equilibrium. 
Hence, we only need to show the optimization problem has a solution.

Given a set of parameters 
$
{\cal P}: 
\{u_i, \vec{c}^i_j, b_j, f_i\}
~\mbox{for}
~i\in\{0,\cdots,s-1\}
~\mbox{and}
~j\in\{0,\cdots,e-1\}$,
let $\phi_{\cal P}(p)$ be a function that returns 
$\{x_{i,j}|i\in\{0,\cdots,s-1;j\in\{0,\cdots,e-1\}$
that satisfy conditions (\ref{eq:ipp-positive-price})-(\ref{eq:ipp-c-4})
if such set of $\{x_{i,j}\}$ exists, or it returns $\emptyset$ otherwise. 
Hence, solving the optimization problem is equivalent to finding the minimum $p$
such that $\phi_{\cal P}(p)$ returns non-empty set.
Further, we can find at least one value for $p$ 
such that $\phi_{\cal P}(p)$ returns non-empty set.
For example, 
let $p=\max\{\frac{f_i}{u_i} | i=0,\cdots,s-1\}$,
$\phi_{\cal P}(p)$ returns 
$\{x_{i,j}=0|i\in\{0,\cdots,s-1;j\in\{0,\cdots,e-1\}$
This way, it is shown that a solution exists for the optimization problem. 
\end{proof}

\begin{algorithm}
\caption{Distributed Algorithm by Each FL Server $S_i$}
\label{alg:heuristic-distributive-fs}
{\bf Inputs:}
    (i) $u_i$ - number of bandwidth units needed by each client of $S_i$;
    (ii) $\bar{c}^i_j$ - expected number of clients 
    for $S_i$ connected to each $E_j$;
    (iii) $f_i$ - fund of $S_i$;
    (iv) $\Delta$ - minimum ratio between the lowest and highest
    prices of edge servers when the prices converge 
    ($0.9$ by default);
    (v) $\tau$ - step size for adjusting request (0.1 by default).


{\bf Variables and Initial Values:}
\begin{itemize}
    \item 
    $rr_{i,j}$: number of bandwidth units requested by $S_i$ 
    to each $E_j$, which is initialized to 
    $\lfloor \bar{c}^i_j \cdot u_i \cdot f_i\rfloor$
    \item $\text{IsConv}:$ indicator on if edge servers' prices converge, initialized to false
\end{itemize}

{\bf Algorithm:}
\begin{algorithmic}[1]
\While{$\text{!IsConv}$}
        \For{$j \gets 0,1,...$ to $e-1$}
            \State 
            send $rr_{i,j}$ to $E_j$
            \State wait till receive result 
            $(p_{j}, b_j, x_{i,j})$ 
        \EndFor
        \If{$ \frac{\min\{p_{j}\}}{\max\{p_{j}\} } > \Delta $}
                \State $\text{IsConv} \leftarrow true $
        \Else
            \State $\tilde{p} \leftarrow \frac
            {\sum_{j=0}^{e-1} (p_j \cdot b_j)}
            {\sum_{j=0}^{e-1} b_j}$
            \State $\delta{d_j} \leftarrow  
            \tilde{p} \cdot b_j - (p_j \cdot b_j)  
            ~\text{for}~j =0,\cdots,e-1$
            \For{$j \gets 0,1,...$ to $e-1$}
                \State $rr_{i,j}\leftarrow rr_{i,j} + 
                \delta d_j \cdot \tau$
            \EndFor 
        \EndIf
 \EndWhile

\State $\text{return } x_{i,j}$ for $j=0,\cdots,e-1$
\end{algorithmic}
\end{algorithm}

\section{Game-Theoretic Heuristic Schemes}

To solve the above optimization problem in a practical distributed system,
we propose a heuristic distributed scheme in 
Algorithms \ref{alg:heuristic-distributive-fs} 
and \ref{alg:heuristic-distributive-es},
executed by every FL server and 
every edge server, respectively.
Assuming a hypothetical central point of control,
we also propose a centralized scheme in 
Algorithm \ref{alg:heuristic-ne} as a reference for evaluation.

\begin{algorithm}
\caption{Distributed Algorithm by Each Edge Server $E_j$}
\label{alg:heuristic-distributive-es}
{\bf Inputs:}
    (i) $b_j$: available units of bandwidth at $E_j$;
    (ii) $s$: total number of FL servers connected to $E_j$. 

{\bf Upon receiving request $rr_{i,j}$ from $S_i$:} record the request. 

{\bf At the time to respond to FL servers' request (periodically every a predefined interval): }
\begin{algorithmic}[1]
    \For{$i \gets 0,1,...$ to $s-1$}
        \If{~new~$rr_{i,j}$~hasn't~been~received}
            \State apply linear regression to predict $rr_{i,j}$
        \EndIf
    \EndFor
    \State $ p_{j} \leftarrow {\frac{\sum_{i=0}^{s-1}rr_{i,j}}{b_{j}}}$ 
    \State \text{return}  $p_{j}, b_j, x_{i,j}$ for $i=0,\cdots,s-1$.
\end{algorithmic}
\end{algorithm}

\subsection{Distributed Scheme}

Algorithms \ref{alg:heuristic-distributive-fs} 
and \ref{alg:heuristic-distributive-es}
formally describes the interactions between each FL server $S_i$
and each edge server $E_j$ for every round of FL.

As specified in 
Algorithms~\ref{alg:heuristic-distributive-fs},
on the first-time interaction with $E_j$, 
$S_i$ sends the initial request computed as
$\lceil\bar{c}^i_j\cdot u_i\cdot f_i\rceil$,
which is the product of 
the total number of bandwidth units needed to support its selected clients (i.e., $\bar{c}^i_j\cdot u_i$) and
the amount of its fund (i.e., $f_i$).
While the inclusion of the first term is easy to understand,
the second term (i.e., $f_i$) is factored in to give higher
priority for FL server having large fund.
The request is sent to $E_j$.

As specified in 
Algorithm~\ref{alg:heuristic-distributive-es}, 
each edge server $E_j$ collects requests from 
all the FL servers that it connects with.
For every period of time, 
it allocates its available bandwidth to the FL servers
proportionally to their requests, and 
determines its current price as the ratio
between its received requests (i.e., demands) 
and its available bandwidth (i.e., supply). 
Note that, 
if it fails to receives a request from a FL server,
linear regression is applied to predict its request
for a number of periods. 

Upon receiving response 
(i.e., bandwidth allocation and current price) 
from $E_j$, as further specified in  
Algorithms~\ref{alg:heuristic-distributive-fs},
$S_i$ first checks if the received prices are close enough
(i.e., the ratio between the minimal and maximum prices
is greater than system parameter $\Delta$).
If so, the prices converge and the interactions complete
for this round. Otherwise, it
computes the system's overall bandwidth demands
(i.e., $\sum_{j=0}^{e-1}(p_j\cdot b_j)$) and 
overall bandwidth supply
(i.e., $\sum_{j=0}^{e-1}(b_j)$) to get 
the ideal price (i.e., demand/supply ratio) in Line 8. 
Based on the above,
it derives the ideal demands for every $E_j$ 
(i.e., $\tilde{p}\cdot b_j$),
and the difference between the ideal and current demands
(i.e., $\delta d_j$ in Line 9). 
Then, it adjusts its requests to every $E_j$ by 
$\delta d_j\cdot \tau$ where $\tau$ is a parameter specifying
the adjustment step, and sends the new requests. 

\comment{
For every round of FL, 
each FL server $S_i$ selects its FL clients for the next-round learning, 
according to the random selection strategy in Section II.  
Based on the selection, $S_i$ sends requests to the edge servers. 
Specifically, assume $S_i$ has a fund of $f_i$ and 
selects $\bar{c}^i_j$ FL clients from edge server $E_j$, and 
each client needs $u_i$ units of bandwidth to report its model updates.
Hence, $S_i$ sends request $rr_{i,j}=\bar{c}^i_j\cdot u_i \cdot f_i$. 

Each $S_i$ runs Algorithm \ref{alg:heuristic-distributive-fs},
which contains a while loop run until a converge point. 
A converge point is when $S_i$ receives 
the unit prices of bandwidth from edge servers 
that are close enough to each other. 
To converge, 
$S_i$ tries to monitor every edge server's received requests and supply,
and adjusts its requests. 
In line 12, an FL server knows the system's total bandwidth demand $d$. In line 13, $\tilde p$ indicates the system's total demand and supply ratio. In line 14, $\tilde d_j$ indicates the total demand amount that should receive at $E_j$ in this $\tilde p$ ratio. With this information, an FL server needs to adjust the demand. $\Delta d_j$ indicates the demand that $S_i$ needs to adjust at $E_j$. A system parameter $\rho$, adjust the rate of change of its demand. FL server will adjust its demand d by $\Delta \bar{d_j}$ in lines 17-22. 
The algorithm \ref{alg:heuristic-distributive-es} runs on each Edge Server, $E_j$. This algorithm will run until the convergence point where FL servers agree to buy the bandwidth with the given price $p_j$ provided by edge server $E_j$. Edge server $E_j$ determines the unit price of bandwidth based on the ratio of total demand received and the available bandwidth in line 11. If $E_j$ does not receive demand from a $S_i$ within a given response time, then $E_j$ will predict $S_i$ demand based on historical demand in lines 6-8. In lines 2-3, an edge server $E_j$ will allocate bandwidth to each $S_i$ according to their remand ratio to the $E_j$ in the converging point.
}

\comment{
\begin{algorithm}[htbp]
\caption{- Edge Server: Heuristic Alg. for Problem Defined in (\ref{eq:ipp-objective})-(\ref{eq:ipp-c-4}) for Distributed approach}
\label{alg:heuristic-distributive-es}
{\bf Inputs:}
    (i) $b_j$: available units of bandwidth at $E_j$;
    (ii) $rr_{i,j}$: requested number of clients for $S_i$ to $E_j$;    
{\bf Outputs:}
\begin{itemize}
    \item 
    $p_{j}$: price per unit of bandwidth at edge server $E_j$
    \item 
    $x_{i,j}$: units of bandwidth allocated to $S_i$ by $E_j$d
\end{itemize}

{\bf Variables and Initial Values:}
\begin{itemize}
    \item ${max\_wait\_time}$: max latency time for $S_i$ at $E_j$
    \item $hr_{i}$: historical demand from $S_i$ to $E_j$
    \item ${response\_time\_i}$: latency time for $S_i$
\end{itemize}

{\bf Algorithm:}
\begin{algorithmic}[1]

 \If {$IsConv \text{ and }  b_j > 0$}
        \State $x_{i,j} \leftarrow \lfloor\frac{rr_{i,j}\cdot b_j}{\sum_{i=0}^{s-1}rr_{i,j}}\rfloor$
        \State $ b_j \leftarrow b_j - x_{i,j}$
        
    \Else 
      \For{$i \gets 0,1,...$ to $s-1$}
            \If{${response\_time\_i}$ > ${max\_wait\_time}$}

                \State $rr_{i,j} \leftarrow {pred\_demand\_LinearRegression(i)}$
                \State $hr_{i,j} \leftarrow rr_{i,j}$
            \Else
                \State $hr_{i,j} \leftarrow rr_{i,j}$
            \EndIf
      \EndFor
      \State $ p_{j} \leftarrow {\frac{\sum_{i=0}^{s-1}rr_{i,j}}{b_{j}}}$ 
\EndIf
\State \text{return}  $ p_{j}, b_j, x_{i,j}$
\end{algorithmic}
\end{algorithm}
}

\vspace*{-0.1in}
\subsection{Centralized Algorithm}
\vspace*{-0.05in}

In the centralized algorithm \ref{alg:heuristic-ne},
a rank is maintained for each FL server $S_i$, 
such that the lower the rank the higher the priority. 
The ranks for all FL servers are initialized to 0, and 
then the loop Line 1-12 is run until all FL servers' ranks reach $+\infty$.
For each iteration,
the FL server with the lowest rank value, 
whose id is $i^*$ as found in Line 2, 
is considered.
Then an appropriate edge server is selected to allocate $u_{i*}$ units of bandwidth for $S_{i^*}$,
which involves the following main steps.
As in Lines 3-4,
if none of the edge servers can provide $u_{i*}$ units of bandwidth,
the rank of $S_{i^*}$ is set to $+\infty$, and the iteration aborts.
Otherwise, an appropriate edge server, with identity $j^*$,
is selected in Lines 6-7.
The probability of selecting each $E_j$ is based on $f(i^*,j)$ in Line 6.
Specifically, the probability is determined by two factors:
the more bandwidth left at $E_j$ (i.e., $rb_j-u_{i^*}$), the higher is the probability; 
the more bandwidth is likely requested from $E_j$ by all FL servers (as estimated by the denominator of $f(i^*,j)$, the lower the probability.
Finally, after the bandwidth allocation has been determined,
the price is decided according to Theorem~\ref{theo:price-minimiality}.

\begin{algorithm}
\caption{Centralized Algorithm for Problem (\ref{eq:ipp-objective})-(\ref{eq:ipp-c-4})}
\label{alg:heuristic-ne}
{\bf Inputs:}
    (i) $u_i$: units of bandwidth needed by each client for FL $i$;
    (ii) $\bar{c}^i_j$: expected number of clients for $S_i$ connected to $E_j$;
    (iii) $b_j$: maximum units of bandwidth at $E_j$;
    (iv) $f_i$: fund at $S_i$

{\bf Outputs:}
\begin{itemize}
    \item 
    $p$: price per unit of bandwidth at every edge server
    \item 
    $x_{i,j}$: units of bandwidth allocated to $S_i$ by $E_j$
\end{itemize}

{\bf Variables and Initial Values:}
\begin{itemize}
    \item 
    $rb_j$: remaining (un-allocated) units of bandwidth at $E_j$
    \item 
    $rk_i$: rank for each $S_i$, which is initialized to $0$
    \item 
    $rr_{i,j}$: remaining request from each $S_i$ to each $E_j$
    which is initialized to $\min\{\lfloor\frac{rb_j}{u_i}\rfloor,\bar{c}^i_j\}\cdot u_i$
    \item 
    $x_{i,j}$: initialized to $0$
\end{itemize}
{\bf Algorithm:}
\begin{algorithmic}[1]
\While{$\min\{rk_i\}<+\infty$}
    \State $i^*~\leftarrow~\argmin_i~rk_i$
    \If{$\max\{rb_j\}<u_{i^*}$}
        $rk_{i^*}~\leftarrow~+\infty$
    \Else
        \State $f(i^*,j)=\frac{rb_j-u_{i^*}}{\sum_{rk_i<+\infty}{\frac{rr_{i,j}}{\sum_{j'}rr_{i,j'}}\cdot\frac{f_{i}}{\sum_{rk_i<+\infty}f_i}}}$
        \State $j^*\leftarrow\argmax_{(\forall j)x_{i^*,j}<\bar{c}^{i*}_j} f(i^*,j)$ 
        \State $rr_{i^*,j^*}\leftarrow rr_{i^*,j^*}-u_{i^*}$
        \State $rb_{j^*}\leftarrow rb_{j^*}-u_{i^*}$
        \State $x_{i^*,j^*}\leftarrow x_{i^*,j^*}+u_{i^*}$
        \If{$x_{i^*,\cdot}>(\bar{c}^{i^*}-1)\cdot u_{i^*}$}
            $rk_{i^*}\leftarrow +\infty$
        \Else 
            ~$rk_{i^*}\leftarrow x_{i^*,\cdot}/f_{i^*}$
        \EndIf
    \EndIf
\EndWhile
\State $p\leftarrow\min\{f_i/x_{i,\cdot}\}$
\end{algorithmic}
\end{algorithm}

\vspace*{-0.1in}
\section{Evaluations}

\subsection{Experimental Settings}

We consider a system with 
$s$ FL processes ongoing, 
$e$ edge servers, 
and up to $c$ clients for each FL process.
By default, $s=5$, and the FL servers are denoted as $S_0,\cdots,S_4$;
$e=5$ and the edge servers are denoted as  $E_0,\cdots,E_4$.
At each round of FL, 
each $S_i$ randomly picks $c$ clients where $c$ is set to 50 by default. 
Each $E_j$ has a limited bandwidth of $b_j = 10$ units.

The FL processes 
are based on the MNIST~\cite{lecun1998gradient} dataset. . 
We prepare the MNIST dataset in a non-iid manner regarding class label distribution. There are two distributions: first, each client has data samples of only 1 class label; second, each client has data samples of 2 class labels. The distribution of the data sample is equal to each client.
We use a simple CNN classifier with a PyTorch sequential model 
consisting of two convolutional layers and two FC layers. 
The model uses the ReLU activation function. 
The momentum is 0.5 and the learning rate is 0.01. 
Each concurrent FL process runs for 20 global rounds, 
each global round consists of 50 epochs.

In the experiments, 
we study the performance of our proposed scheme 
in the presence of a variety of heterogeneity. 

{\noindent \em \underline{Heterogeneity of Client Distribution}}:
To assess client distribution heterogeneity, 
we use parameters $\alpha$ and $\beta$ to 
control the heterogeneity of client distribution over edge servers.  
Specifically, when distributing 
$c$
clients of each $S_i$ to the edge servers:
    we pick $\alpha\cdot s$ FL servers to distribute their clients only over $\beta\cdot e$ edge servers,
    where $0\leq\alpha\leq1$ and $0\leq\beta\leq1$,
    while the rest $(1-\alpha)\cdot s$ FL servers uniformly randomly distribute their clients to all the $E$ edge servers. 
Note that the distribution is uniform when $\alpha=0$ and $\beta=1$.

{\noindent \em \underline{Heterogeneity of Funds}}:
Each FL server is set to have fund $f_i$
between $0.5$ and $1$.
  
That is, let $0\leq\gamma\leq 1$, and 
$\gamma\cdot s$ be the number of FL servers whose funds are between 
$0.5$ 
and $1$ 
while the rest $(1-\gamma)\cdot s$ FL servers each having the fund of 
$0.5$.
More specifically, 
the $\gamma\cdot s$ FL servers would have funds of 
$0.5 + \frac{k\cdot 0.5}{\gamma\cdot s}$,
where $k=1,\cdots, \gamma\cdot s$, respectively. 
Under such settings, we study how 
the performance of each FL server
is affected by its fund.

{\noindent \em \underline{Heterogeneity of Required Bandwidth Per Client}}: 
Each $S_i$ needs $u_i$ units of bandwidth for its clients, 
where $u_i$ ranges between $1$ and $s$.
Let $0\leq\delta\leq 1$, and $\delta\cdot s$ be the number of FL servers whose bandwidths are between $u_i=1$ and $s$, 
while the rest $(1-\delta)\cdot s$ FL servers, each having the bandwidths of $u_i$. 
More specifically, the $\delta\cdot s$ FL servers would have bandwidths of $u_i + \frac{k\cdot (s-u_i)}{\delta\cdot s}$, 
where $k=1,\cdots, \delta\cdot s$, respectively.

\subsection{Compared Baseline Scheme}

To assess the effectiveness of the proposed game-theoretic schemes, 
we conduct them with a baseline (distributed) scheme,
where each edge server allocates its bandwidth to requesting cloud servers proportionally to their requesting demand.

\subsection{Evaluation Metrics} 

In the evaluation work, 
we adopt the following metrics:
    (i) number of clients acquired by each FL server;
    (ii) Test accuracy achieved by each FL server.
Each FL server should try to acquire as much bandwidth as possible from the edge servers using its available fund,
to support as many of its clients as possible. 
With as many clients participating,
an FL server can achieve higher test accuracy for its learning.

\subsubsection{Number of Clients for Each FL Process} 

The FL servers will try to achieve the required up-link bandwidth from the edge server with available funds. The BandA greedy algorithm allocated the requesting up-link bandwidth to the FL server by each edge server with available bandwidth. The baseline schema allocated the requesting up-link bandwidth to the FL server by each edge server with available bandwidth proportionally to their requests. Here, $u_i = 1$ units of bandwidth are needed by each edge device. Therefore, the number of edge devices achieved by each FL server is a performance evaluation metric. We find the number of edge devices achieved by the FL server from the proposed game-based and baseline schema and then compare the performance of these two schemes. 

\subsubsection{Test Accuracy for Each FL Process} 

If an FL server achieved many edge devices from the BandA greedy algorithm, it could achieve much bandwidth and receive a large data sample from the selected edge devices. The locally trained model by the selected edge devices is aggregated by FL servers using FedAvg\cite{mcmahan2017communication}. The FL servers used their test data to infer the aggregated global model. Therefore, the test accuracy of the FL server is a performance metric.

\subsection{Evaluation Results I: Clients Acquired by FL Servers 
under a Variety of Heterogeneity}

\subsubsection{Client Distribution Heterogeneity}
\label{sec:client-distribution-heterogeneity}

We evaluate the proposed centralized and distributed game-theoretic scheme 
against the baseline (distributed) scheme by experimenting with 
different values of heterogeneity parameters $\alpha=[0.2, 0.4, 0.6, 0.8]$ and $\beta=[0.2, 0.4, 0.6, 0.8]$. 
Note that the distribution of the clients is uniform when $\alpha=0$ and $\beta=1$. 

From  Figures~\ref{fig:bar_base_client_hetero},
\ref{fig:bar_game_client_hetero} and \ref{fig:distr_bar_game_client_hetero}, 
the proposed game-theoretic schemes allocate bandwidth more fairly 
then the baseline scheme, given 
the heterogeneous distribution of clients 
(and thus heterogeneous bandwidth requests) over the edge servers. 
That is, all five FL servers acquire an equal number of clients. 
For example, when $\alpha=0.4$ and $\beta=0.6$, 
edge servers $E_0$, $E_1$ and $E_2$ receive more bandwidth requests 
than $E_3$ and $E_4$. Using the baseline scheme, 
$S_0$ and $S_1$ acquire less bandwidth compared to $S_2$, $S_3$ and $S_4$ 
because they receive bandwidth proportionally to their requests.

In the extreme scenario of resource distribution, 
the value of heterogeneity parameter $\alpha$ becomes much greater than $\beta$, which means a significantly larger number of FL servers sending bandwidth requests 
to a smaller number of edge servers. There may not be enough bandwidth 
at the edge servers to assign bandwidth fairly to the requesting servers. 
Even proposed game-theoretic schemes may not fairly assign bandwidth
to the FL server. However, proposed schemes perform well except for such scenarios.

\begin{figure}
    \centering
    \includegraphics[width=8.5cm]{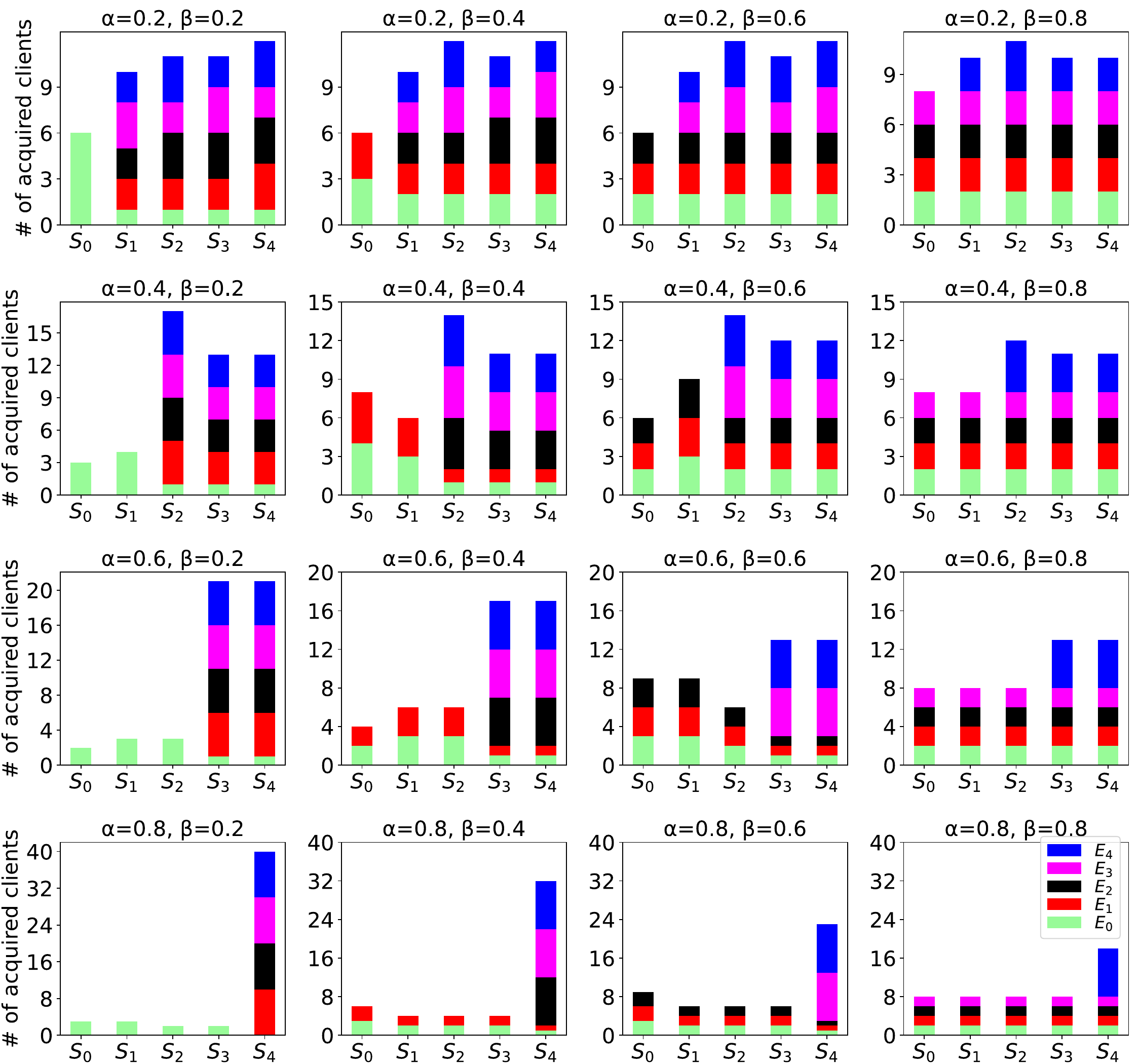}
    \caption{Numbers of clients acquired by FL servers with Client Heterogeneity (Baseline Scheme).}
    \label{fig:bar_base_client_hetero}
\end{figure}
\begin{figure}
    \centering
    \includegraphics[width=8.5cm]{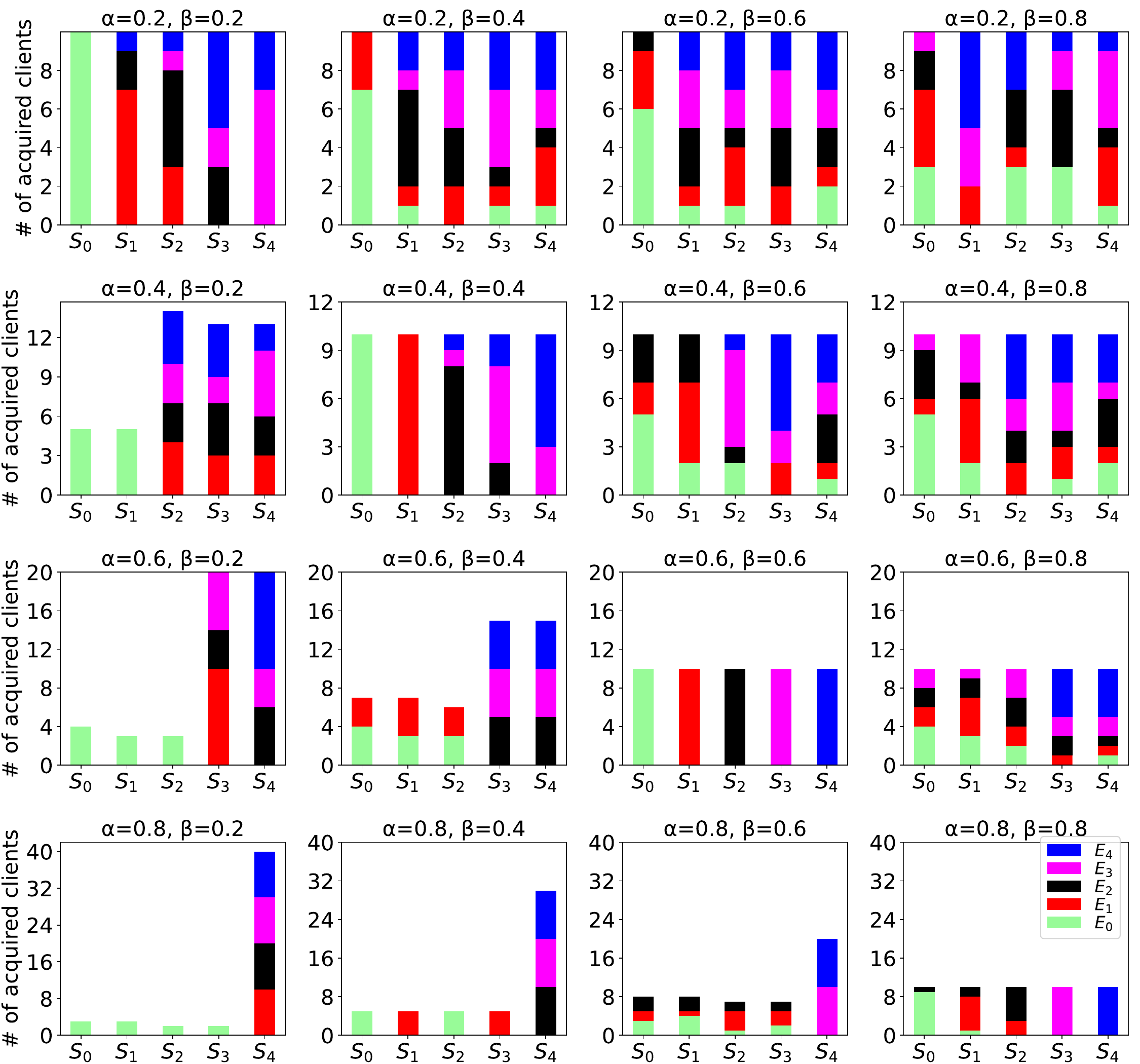}
    \caption{Numbers of clients acquired by FL servers with Client Heterogeneity (Proposed Centralized Scheme).}
    \label{fig:bar_game_client_hetero}
\end{figure}
\begin{figure}
    \centering
    \includegraphics[width=8.5cm]{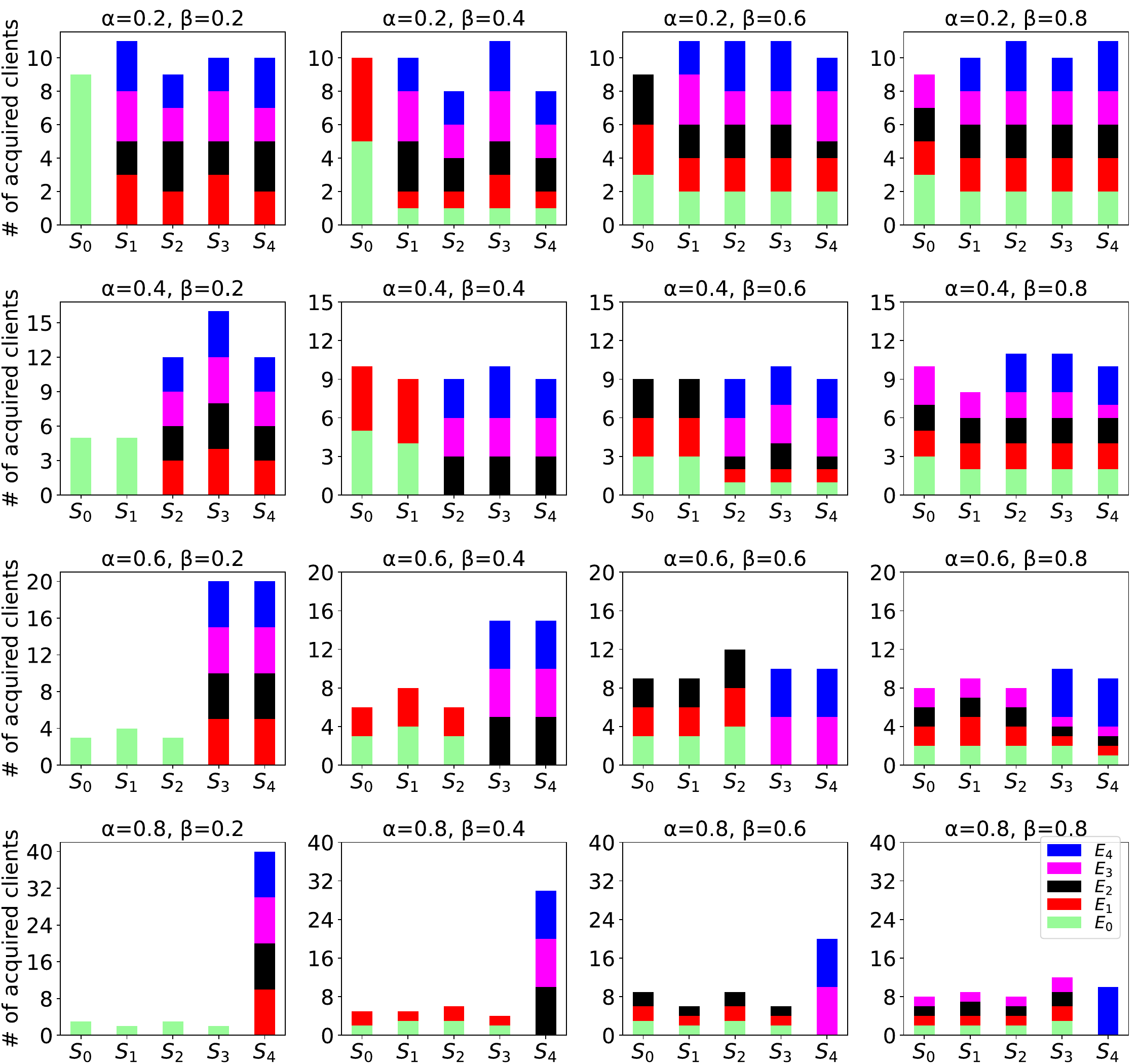}
    \caption{Numbers of clients acquired by FL servers with Client Heterogeneity (Proposed Distributed Scheme).}
    \label{fig:distr_bar_game_client_hetero}
\end{figure}

\subsubsection{Fund heterogeneity}
We vary fund heterogeneity parameter $\gamma$ between $0$ and $0.8$, which results in the heterogeneous distribution of funds among the FL servers. 
In Figures~\ref{fig:fund_hetero_ed_es} (a) and~\ref{fig:fund_hetero_ed_es} (b), the FL servers can acquire roughly an equal number of clients when they have the same fund ($\gamma = 0$) using proposed centralized and distributed game-theoretic scheme. When FL server's funds differ ($\gamma = [0.4, 0.6, 0.8])$, an FL server with a larger fund can acquire more clients.

\begin{figure}
    \centering
    \includegraphics[width=8.5cm]{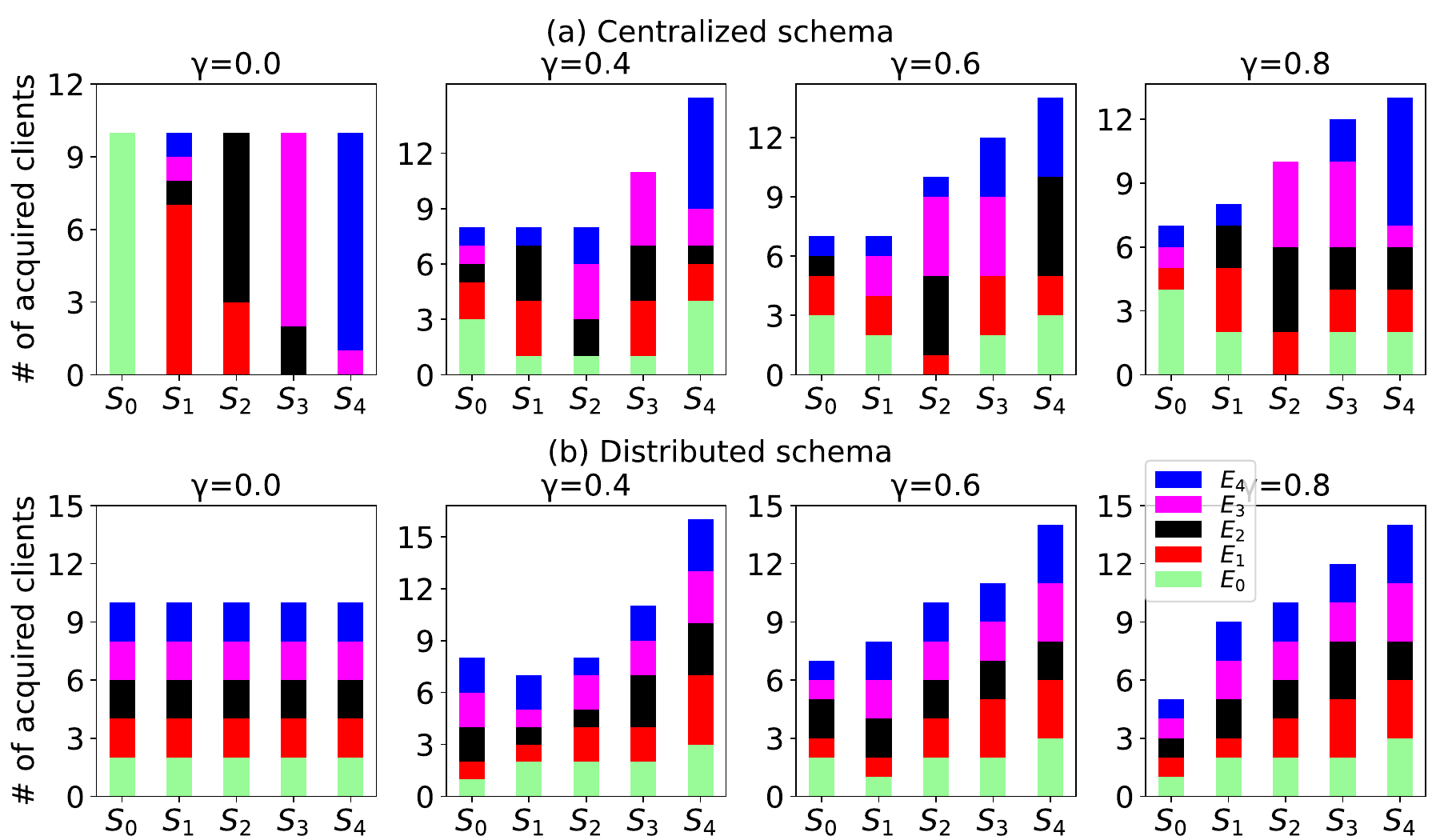}
    \caption{Numbers of clients acquired by FL servers (Heterogeneity of Funds).}
    \label{fig:fund_hetero_ed_es}
\end{figure}

\subsubsection{Heterogeneity of Required Bandwidth Per Client}
\begin{figure}
    \centering
    \includegraphics[width=8.5cm]{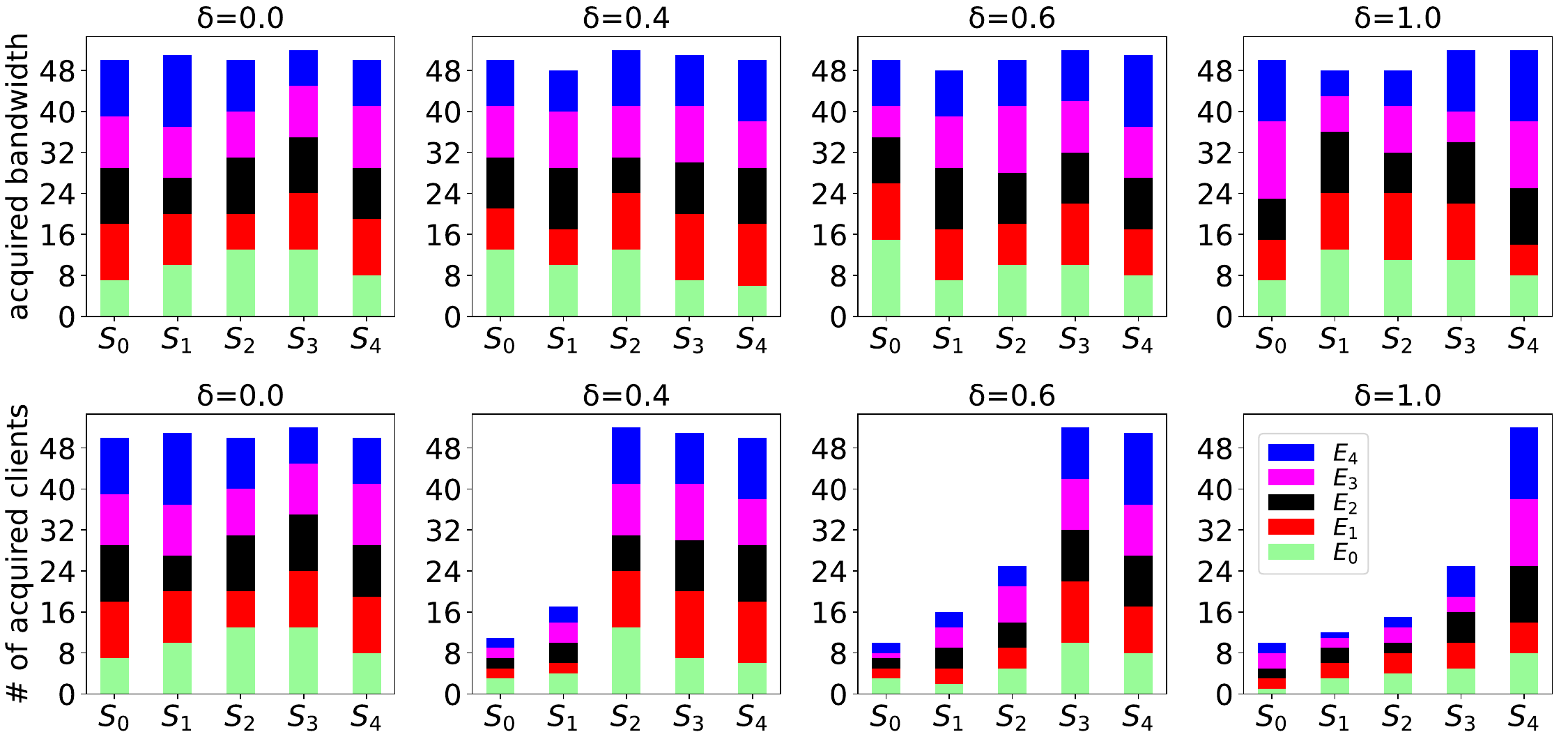}
    \caption{Bandwidth and number of clients acquired by FL servers (Heterogeneity of Required Bandwidth Per Client - Centralized Scheme).}
    \label{fig:bandwidth_hetero}
\end{figure}

\begin{figure}
    \centering
    \includegraphics[width=8.5cm]{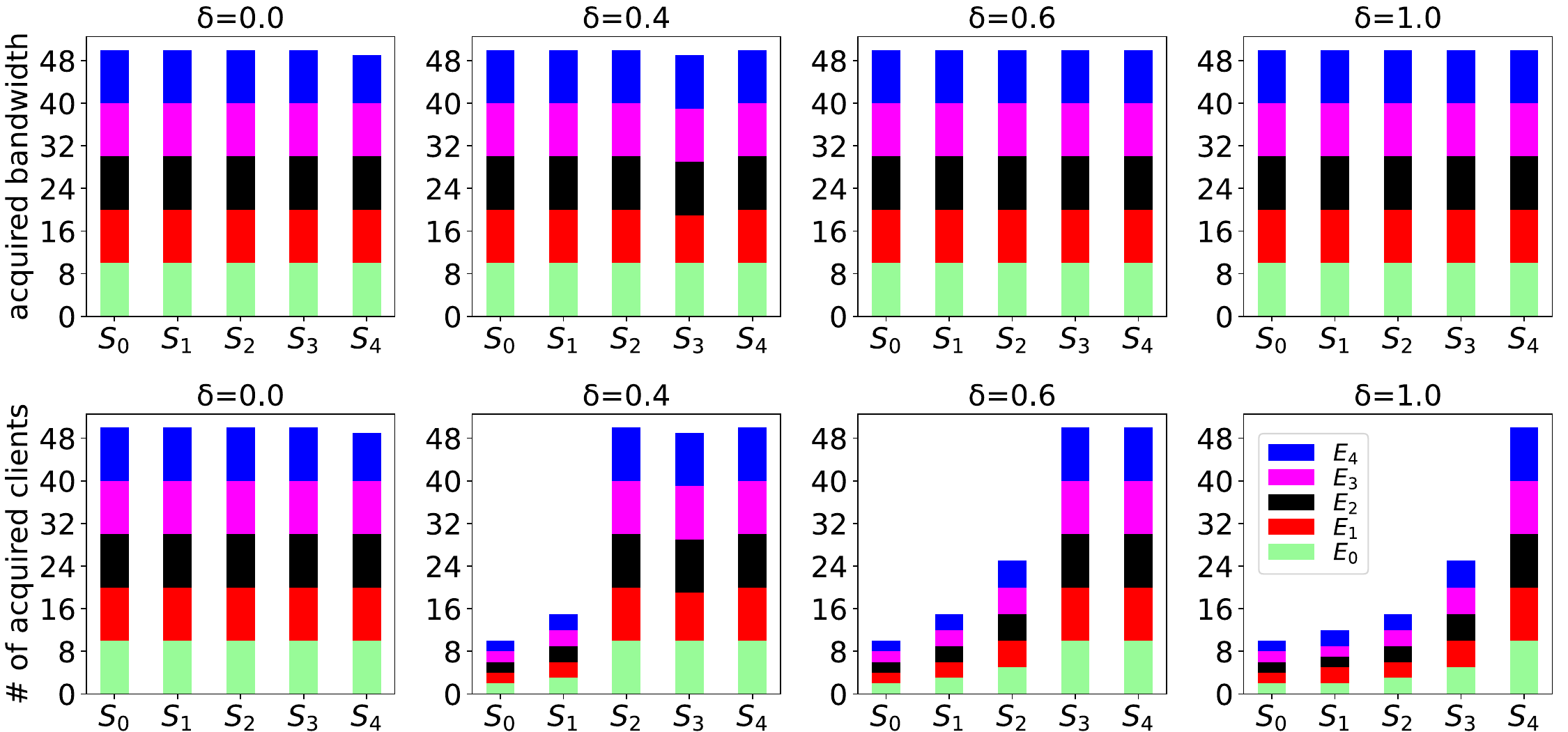}
    \caption{Bandwidth and number of clients acquired by FL servers (Heterogeneity of Required Bandwidth Per Client - Distributed Scheme.)}
    \label{fig:fig:bandwidth_hetero_distr}
\end{figure}
We vary $\delta$ between $0$ and $1$, 
which results in that 
different FL servers may require different amounts of bandwidth per client. We consider client $c = 250$ and $b_j = 50$ only for this heterogeneity.
In Figures~\ref{fig:bandwidth_hetero} and \ref{fig:fig:bandwidth_hetero_distr}, all FL servers are allocated with a similar amount of bandwidth, as proposed game-theoretic schemes allocate bandwidth fairly to FL servers. Therefore, the number of clients they acquire differs due to the difference in the required bandwidth per client. According to Figures~\ref{fig:bandwidth_hetero} and \ref{fig:fig:bandwidth_hetero_distr}, a larger bandwidth required per client, the fewer clients are acquired. 

\begin{table*}
\caption{Test Accuracy with Client Heterogeneity}
\begin{center}
\begin{tabular}{|cc|c|c|c|c|c|c|c|c|c|c|c|c|l}
\cline{1-14}
&  &  \multicolumn{6}{ c| }{Each Client having Training Data with 1 class label} & \multicolumn{6}{ c| }{Each Client having Training Data with 2 class labels} \\ \cline{1-14}
&  & \multicolumn{2}{ c| }{Acc(\%) Baseline} & \multicolumn{2}{ c| }{Acc(\%) Centralized} & \multicolumn{2}{ c| }{Acc(\%) Distributed} & \multicolumn{2}{ c| }{Acc(\%) Baseline} & \multicolumn{2}{ c| }{Acc(\%) Centralized} & \multicolumn{2}{ c| }{Acc(\%) Distributed}  \\ \cline{1-14}
$\alpha$ & $\beta$ & $S_0$ & $S_4$ & $S_0$ & $S_4$ & $S_0$ & $S_4$ & $S_0$ & $S_4$ & $S_0$ & $S_4$ & $S_0$ & $S_4$ \\ \cline{1-14}
\multicolumn{1}{ |c  }{\multirow{4}{*}{0.2} } &
\multicolumn{1}{ |c| }{0.2} & 34.00  & 39.12 & 41.76  & 38.67 & 35.39 & 41.01 &  67.04  &  80.16 & 77.35  & 76.25 & 77.73 & 76.35 & \\ \cline{2-14}
\multicolumn{1}{ |c  }{}                        &
\multicolumn{1}{ |c| }{0.4} & 36.97 & 38.76 & 41.26  & 39.05 & 41.61 & 42.90 &  68.33 & 78.60 & 78.69   & 78.55 & 79.69 & 78.21 &      \\ \cline{2-14}
\multicolumn{1}{ |c  }{}                        &
\multicolumn{1}{ |c| }{0.6} & 36.65  &  39.63 & 38.05 &  39.36 & 41.05 & 41.11 &  69.92 &  81.39 & 80.41   & 83.03  & 74.28 & 78.57 &    \\ \cline{2-14}
\multicolumn{1}{ |c  }{}                        &
\multicolumn{1}{ |c| }{0.8} & 37.29 & 39.40 & 41.01 & 39.99 &  36.90 & 40.61 & 80.69   & 82.09 & 79.53  & 80.14 & 78.35 & 82.88 &    \\ \cline{1-14}
\multicolumn{1}{ |c  }{\multirow{4}{*}{0.4} } &
\multicolumn{1}{ |c| }{0.2} & 20.24  &  37.99 & 30.98   & 40.70 & 29.50 & 42.00 &  51.27 &  81.23 & 58.86   & 83.29 & 63.25  & 83.27 &   \\ \cline{2-14}
\multicolumn{1}{ |c  }{}                        &
\multicolumn{1}{ |c| }{0.4} & 36.66 &  39.45 & 41.14  & 37.32  & 37.13 & 41.17 &   73.21 &  81.57 & 76.60   & 80.42  & 80.23 & 81.91 &   \\ \cline{2-14}
\multicolumn{1}{ |c  }{}                        &
\multicolumn{1}{ |c| }{0.6} & 37.44  &  38.10 & 42.21  & 38.54 & 40.69 &  41.32 &   69.86  & 82.15 & 79.61 & 79.75 & 74.12 & 80.18 &     \\ \cline{2-14}
\multicolumn{1}{ |c  }{}                        &
\multicolumn{1}{ |c| }{0.8} & 37.09  & 42.15 & 38.42  & 38.68  &  37.04 & 40.15 &  74.79 &  82.11 & 80.99  & 79.4  & 79.87  & 74.49 &    \\ \cline{1-14}
\multicolumn{1}{ |c  }{\multirow{4}{*}{0.6} } &
\multicolumn{1}{ |c| }{0.2} &  18.83 &  42.45 & 27.74  & 44.31  & 23.35 &  44.13 & 32.16  &  87.84 & 58.70  & 87.76  & 48.91 & 87.43 &    \\ \cline{2-14}
\multicolumn{1}{ |c  }{}                        &
\multicolumn{1}{ |c| }{0.4} & 26.60 &  43.25 & 35.39  & 43.46  & 34.74 & 39.14 &  56.35 & 87.97 & 72.66 &  83.55  & 69.25 & 82.25 &    \\ \cline{2-14}
\multicolumn{1}{ |c  }{}                        &
\multicolumn{1}{ |c| }{0.6} & 37.15 &  45.17 & 43.12  & 43.53  & 38.75 & 45.20 &  79.17 &  84.41 & 78.46 & 77.63  &  82.06 & 79.98 &   \\ \cline{2-14}
\multicolumn{1}{ |c  }{}                        &
\multicolumn{1}{ |c| }{0.8} & 35.50  &  41.68 & 40.52  & 44.08 &  39.43 & 39.98 & 74.43 &  82.43 & 82.25  &  77.53  & 78.04 & 77.72 &   \\ \cline{1-14}
\multicolumn{1}{ |c  }{\multirow{4}{*}{0.8} } &
\multicolumn{1}{ |c| }{0.2} & 24.18 &  46.64 & 22.30 & 49.51 &  21.58 & 51.48 &   47.66 & 92.12 & 47.72  & 92.81 & 50.80  & 92.79 &    \\ \cline{2-14}
\multicolumn{1}{ |c  }{}                        &
\multicolumn{1}{ |c| }{0.4} & 31.74 &  48.23 & 30.98  & 48.19 &  33.04 & 44.72 & 69.33 & 91.79 & 67.03  & 91.79 & 65.04  & 91.83 &    \\ \cline{2-14}
\multicolumn{1}{ |c  }{}                        &
\multicolumn{1}{ |c| }{0.6} & 37.52 &  46.15 & 35.60  & 42.98  & 40.99 & 42.97 &  77.24 &  90.37 & 80.64  & 88.29  & 79.43  & 87.42 &   \\ \cline{2-14}
\multicolumn{1}{ |c  }{}                        &
\multicolumn{1}{ |c| }{0.8} & 35.30 &  41.88 & 38.52 & 37.13 &   38.93 & 41.14 &  80.94 & 86.98 & 80.29 & 81.81 & 83.24 & 80.85 &   \\ \cline{1-14}
\multicolumn{1}{ |c  }{\multirow{1}{*}{0.0} } &

\multicolumn{1}{ |c| }{1.0} &  40.10 &  40.00 & 38.65 & 40.04 &  36.88 & 39.73 &  80.92 &  82.26 & 81.08  & 79.07 & 77.95  & 79.14 &   \\ \cline{1-14}
\end{tabular}
\label{tbl:distr_fs_test_accu_client_hetero_mnist_class1}
\end{center}
\end{table*}

\subsection{Evaluation Results II: Test Accuracy in FL Servers}

Next, we measure the FL servers' test accuracy in the afore-discussed heterogeneous scenarios.

\subsubsection{Client Distribution Heterogeneity}
Following the same settings in 
Section~\ref{sec:client-distribution-heterogeneity},
we evaluate and compare the performance of 
the proposed game-theoretic schemes against the baseline scheme
in terms of test accuracy at FL servers. In a uniform client distribution scenario, $\alpha=0.0$ and $\beta=1.0$, the baseline and game base models perform similar accuracy shown table \ref{tbl:distr_fs_test_accu_client_hetero_mnist_class1}. FL servers achieved similar test accuracy in both models on 1 class label non-iid data distribution. The FL server's test accuracy is also close enough on 2 class labels data distribution.
First, when $\alpha=0.0$ and $\beta=1.0$ 
(i.e., clients are uniformly distributed to edge servers),
the baseline and the game-theoretic schemes achieve similar test accuracy,
as shown in the last rows of 
Tables~\ref{tbl:distr_fs_test_accu_client_hetero_mnist_class1}. 
In the first experiment, 
each client has training data with only 1 class label; 
in another experiment, 
each client has training data with 2 class labels. 
Therefore, not all FL servers can obtain models 
trained from the data with most class labels.
Particularly, 
if an FL server acquires a small number of clients, 
its model is trained based on the data with only a few class labels.
Also, since 
clients are randomly selected from the client pool, 
an FL server could receive a model updated based on the data with 
different number of class labels 
in different rounds. 
Thus, testing accuracy varies in rounds, 
even though they get model updates from the same number of clients.
As we can see, the baseline scheme does not assign bandwidth fairly. 
For example, when $\alpha=0.4$ and $\beta=0.6$, 
FL server $S_0$ acquires less bandwidth than $S_4$;
therefore, it has lower test accuracy than $S_4$. 
In the game-theoretic schemes, 
except for the highly heterogeneous scenarios such as when
($\alpha=0.6,\beta=0.2$), ($\alpha=0.8,\beta=0.2$) and ($\alpha=0.8,\beta=0.4$), 
we can see that FL servers
$S_0$ and $S_4$ have similar test accuracy and are better than baseline schema.

\subsubsection{Fund Heterogeneity}

We evaluate the impact of fund heterogeneity.
Table~\ref{tbl:fund_heterogeneity_acc} illustrates the test accuracy 
as heterogeneity parameter $\gamma$ varies. We can observe that an FL server achieves higher test accuracy when it has a larger fund. For example, in the distributed scheme and training data with 2 class labels, for $\gamma=0.4$, FL server $S_0$ has a fund of 0.5 and a test accuracy of $79.05\%$, whereas  $S_4$ has a larger fund of 1.0 and a higher test accuracy of $86.71\%$.  Therefore, $S_4$ acquired more bandwidth with the fund and achieved higher test accuracy.

\begin{table}
\caption{Test Accuracy with Fund Heterogeneity}
\begin{center}
\begin{tabular}{|c|c|c|c|c|c|c|l}
\cline{1-7}

& \multicolumn{6}{ c| }{Game-theoretic Centralized Scheme}  \\ \cline{1-7}
& \multicolumn{3}{ c| }{Acc(\%) 1 class label} & \multicolumn{3}{ c| }{
Acc(\%) 2 class labels}  \\ \cline{1-7}
$\gamma$ & $S_0$ & $S_2$ & $S_4$ & $S_0$ & $S_2$ & $S_4$ \\ \hline
0.0 & 41.40 & 37.73 & 38.76 & 78.43 & 79.28  & 79.67 \\ \hline
0.4 & 39.24 & 35.77 & 39.42 & 77.01 & 77.61  & 87.16 \\   \hline       
0.6 & 35.92 & 37.95 & 42.23 & 72.70 & 79.67  & 81.09 \\  \hline        
0.8 & 34.26 & 39.97 & 42.77 & 75.94 & 82.08  & 85.30 \\   \hline
& \multicolumn{6}{ c| }{Game-theoretic Distributed Scheme} \\ \hline
0.0 & 40.07 & 41.55 & 42.76 & 80.80 & 80.26 & 80.00   \\ \hline
0.4 & 36.22 & 35.15 & 46.78 & 79.05 & 74.05 & 86.71   \\ \hline
0.6 & 33.57 & 43.00 & 37.37 & 65.37 & 77.46 & 85.67   \\ \hline
0.8 & 31.43 & 36.68 & 43.78 & 66.33 & 78.41 & 81.91  \\ \hline
\end{tabular}
\label{tbl:fund_heterogeneity_acc}
\end{center}
\end{table}

\subsubsection{Impact of $\rho$}
Recall that Figure~\ref{fig:bar_game_client_hetero} shows 
an FL server receives different numbers of clients 
with different parameter values $\rho$. For the proposed game-theoretic scheme, table~\ref{tbl:user_fraction} shows that an FL server's test accuracy is higher when it receives model updates from a large number of clients. 
For example, when $\rho = 0.6$, 
FL server $S_0$ selects $60\%$ of all feasible clients in each training round, and 
achieves a test accuracy of $70.50\%$. 
On the other hand, when $\rho = 1.0$, 
FL server $S_0$ selects all of its feasible clients in each training round, 
and achieves a test accuracy of $79.34\%$.

\begin{table}[H]
\caption{Test Accuracy with Varying $\rho$}
\begin{center}
\begin{tabular}{|c|c|c|c|c|c|c|l}
\cline{1-7}
& \multicolumn{3}{ c| }{Acc(\%) 1 class label} & \multicolumn{3}{ c| }{
Acc(\%) 2 class labels}  \\ \cline{1-7}
$\rho$ & $S_0$ & $S_2$ & $S_4$ & $S_0$ & $S_2$ & $S_4$ \\ \hline
0.4 & 28.93 & 23.12 & 22.81 & 52.38 & 48.14 & 47.27 \\ \hline
0.6 & 34.58 & 29.93 & 31.57 & 70.50 & 61.86 & 62.89 \\  \hline          
0.8 & 39.41 & 33.78 & 35.00 & 76.77 & 66.91 & 70.45 \\  \hline           
1.0 & 39.53 & 37.08 & 40.00 & 79.34 & 78.33 & 79.99 \\ \hline    
\end{tabular}
\label{tbl:user_fraction}
\end{center}
\end{table}

\vspace*{-0.1in}

\section{Related Works}
\label{sec:related}

Since the inception of federated learning~\cite{Jakub45648}, 
research has been ongoing to enhance 
the efficiency~\cite{wang2018edge,10.1109/WCNC49053.2021.9417299,li2019smartpc,liu2020accelerating}, 
privacy~\cite{wang2019beyond,geyer2017differentially}, 
security~\cite{blanchard2017machine,fang2020local}, 
incentives~\cite{jiao2020toward,zhan2020learning}, 
among others.

Communication efficiency~\cite{li2020federated} in FL 
has been identified as a significant hurdle. 

Research focuses on adapting the FL algorithm 
to mitigate the communication load on the network, 
involving techniques like 
selecting clients that demonstrate notable training advancements, 
compressing gradient vectors through quantization, or 
speeding up training through sparse or structured updates. 
Works show the development of new optimization and model aggregation algorithms such as \cite{chen2018lag,lin2017deep,Jakub45648,aji2017sparse,karimireddy2019scaffold,li2020federated,haddadpour2019convergence}. 
In \cite{pmlr-v162-yi22a}, the authors proposed a QSFL framework that optimizes FL uplink communication overhead from two levels: 
rejecting redundant clients uploading model updates and 
compressing models to unique segments for uploading. 
Studies are also conducted to reduce communication costs 
by shifting extensive model aggregations to the edge. 
This approach avoids substantial communication overhead to the cloud 
while maintaining learning performance. 
Consequently, studies have focused on 
the Hierarchical Federated Learning (HFL) framework and explored communication efficiency issues.
A concept of Hierarchical Federated Learning networks has been put forth in \cite{liu2020client}, 
where several edge servers initially conduct partial model aggregation. 
A cloud server subsequently combines these intermediate results to generate the final model. 
In \cite{luo2020hfel},
they formulate a joint computation and communication resource allocation and edge association problem 
for device users under Hierarchical Federated Edge Learning (HFEL) framework 
to achieve global cost minimization. 

Wang et al. proposed an approach in \cite{wang2021resource} that uses clusters and hierarchical aggregation. 
The authors introduced an effective algorithm to determine the optimal number of clusters while considering resource limitations, 
and they carried out the training using edge computing. 
Yang et al. introduced an architecture in \cite{yang2021h} for Hierarchical Federated Learning (H-FL). 
 
The authors develop a strategy to reconstruct the runtime distribution to counteract the degradation caused by non-IID data. 
Additionally, they design a mechanism for compression correction that reduces communication overhead without compromising model performance. 
Deng et al.~\cite{deng2021share} formulated the communication cost minimization (CCM) problem 
to decrease the communication costs that arise from combining models from the edge and cloud within the Hierarchical Federated Learning (HFL) framework. 
The authors suggest that 
improving the uniformity of the training data at the edge aggregator 
can greatly enhance the performance of Federated Learning (FL). 
They propose SHARE, which stands for SHaping dAta distRibution at Edge, 
to efficiently solve the CCM problem by modifying the data distribution at the edge. 
Several studies~\cite{kang2019incentive,joint,zou2019mobile} 
have applied the game theory to FL systems.

Liu et al.~\cite{liu2020client} proposed an incentive mechanism based on contract theory 
to encourage data owners with high-quality local training data to participate in the learning processes. 
Feng et al.~\cite{joint} used a Stackelberg game model 
to analyze self-organized mobile devices' transmission and training data pricing strategies and the model owner's learning service subscription in the cooperative federated learning system.  
Zou et al.~\cite{zou2019mobile} employed an evolutionary game theory to dynamically model the strategies of mobile devices 
with bounded rationality in the federated learning system.
As far as we know, none of the previous works have addressed the scenarios where 
multiple federated servers run simultaneously, 
with edge servers serving as intermediaries between FL servers and edge devices. 
Our paper tackles such scenarios with a game-theoretic approach.

\section{Conclusion}
\vspace{-0.1in}

This paper studied the bandwidth allocation problem in 
a three-tier FL system with multiple FL processes run concurrently.
We proposed a Stackelberg game-theoretic approach to formulate the problem and
designed heuristic distributed and centralized schemes to find an approximate Nash Equilibrium of the game.
We have conducted extensive experiments, and the results indicate that
our proposed scheme can efficiently and fairly allocate bandwidth at edge servers to support concurrent FL processes.

\end{document}